\documentclass[11pt,onecolumn]{article}
\usepackage{amsmath,amsthm,amssymb,amsfonts,epsfig,graphicx}
\usepackage{hyperref}
\usepackage{cleveref}
\usepackage{color}
\usepackage{enumitem}
\usepackage{url}
\usepackage{array,multirow}
\usepackage{bm}

\usepackage{stmaryrd}

\usepackage{enumitem}
\usepackage[vlined,linesnumbered,ruled,resetcount]{algorithm2e}

\usepackage[left=1.0in,top=0.5in,right=1.05in,bottom=1.0in,nohead,textheight=10in,footskip=0.3in]{geometry}

\newtheorem{theorem}{Theorem}
\newtheorem{corollary}[theorem]{Corollary}

\newtheorem{assumption}{Assumption}
\newtheorem{lemma}[theorem]{Lemma}

\newtheorem{proposition}[theorem]{Proposition}

\SetKwInput{kwMyRepeat}{Repeat}

\crefname{theorem}{Theorem}{Theorems}
\crefname{algocf}{Algorithm}{Algorithms}
\crefname{problem}{problem}{problems}
\Crefname{problem}{Problem}{Problems}
\crefformat{problem}{problem~(#2#1#3)}
\Crefformat{problem}{Problem~(#2#1#3)}

\newcommand{\NN}{\mathbb{N}}
\newcommand{\defn}[1]{\textbf{#1}}

\newcommand{\tflow}{T_{{\rm maxflow}}}
\newcommand{\Scut}{S_{\text{cut}}}
\newcommand{\Tcut}{T_{\text{cut}}}
\newcommand{\OPT}{{\rm OPT}}
\newcommand{\midweight}{w}

\long\def\ignore#1{}
\def\myps[#1]#2{\includegraphics[#1]{#2}}

\def\br(#1,#2){{\langle #1,#2 \rangle}}
\def\setZ[#1,#2]{{[ #1 .. #2 ]}}

\newcommand{\eqdef}{{\stackrel{\mbox{\tiny \tt ~def~}}{=}}}

\def\q={\quad=\quad}
\def\qq={\qquad=\qquad}

\def\calD{{\cal D}}

\def\calL{{\cal L}}
\def\calN{{\cal N}}

\def\calP{{\cal P}}

\def\calS{{\cal S}}

\def\XX(#1){{{#1}^\downarrow}}

\def\psfile[#1]#2{}
\def\psfilehere[#1]#2{}

\def\assign(#1,#2){\langle#1,#2\rangle}
\def\edge(#1,#2){(#1,#2)}

\def\slack(#1){\texttt{slack}({#1})}
\def\barslack(#1){\overline{\texttt{slack}}({#1})}

\def\unitvec(#1){{{\bf u}_{#1}}}

\def\br#1{{\llbracket #1 \rrbracket}}

\title{\Large\bf  }
\author{}
\title{
\Large\bf  \vspace{-20pt} A Fast Approximation Algorithm \\ for the Minimum Balanced Vertex Separator in a Graph
}
\author{Vladimir Kolmogorov \\ \normalsize Institute of Science and Technology Austria (ISTA) \\ {\normalsize\tt vnk@ist.ac.at}
\and
Jack Spalding-Jamieson \\ \normalsize Independent \\ {\normalsize\tt jacksj@uwaterloo.ca}}
\date{}

\begin{document}
\maketitle

\begin{abstract}
We present a family of fast pseudo-approximation algorithms for
the minimum balanced vertex separator problem in a graph.
Given a graph $G=(V,E)$
with $n$ vertices and $m$ edges,
and a (constant) balance parameter $c\in(0,1/2)$,
where $G$
has some (unknown) $c$-balanced vertex separator of size $\OPT_c$,
we give a
(Monte-Carlo randomized) algorithm running in $O(n^{O(\varepsilon)}m^{1+o(1)})$ time
that produces a $\Theta(1)$-balanced vertex separator of size $O(\OPT_c\cdot\sqrt{(\log n)/\varepsilon})$
for
any value $\varepsilon\in[\Theta(1/\log(n)),\Theta(1)]$.
In particular,
for any function $f(n)=\omega(1)$
(including $f(n)=\log\log n$, for instance),
we can
produce a vertex separator of size $O(\OPT_c\cdot\sqrt{\log n}\cdot f(n))$
in time $O(m^{1+o(1)})$.
Moreover, for an arbitrarily small constant $\varepsilon=\Theta(1)$,
our algorithm also achieves the best-known approximation ratio for this problem
in $O(m^{1+\Theta(\varepsilon)})$ time.

The algorithms are
based on a semidefinite programming (SDP) relaxation of the problem,
which we solve using the Matrix Multiplicative Weight Update (MMWU) framework of Arora and Kale.
Our oracle for MMWU uses $O(n^{O(\varepsilon)}\text{polylog}(n))$ almost-linear time maximum-flow computations,
and would be sped up if the time complexity of maximum-flow improves.
\end{abstract}

\section{Introduction}
\label{sec:intro}

Partitioning a graph into smaller pieces is a fundamental problem
that arises in many fields.
One kind of graph partition with widespread applications in
algorithms is the \defn{vertex separator}:
A small set of vertices whose removal disconnects the graph into small components.

More formally,
let $G=(V,E)$ be a graph with $n=|V|$ vertices.
Let $\midweight:V\to\NN$ be a set of integer weights for the vertices.
Let $c\in(0,\frac12)$ be a constant.
For any subset $S\subset V$, we let $\midweight(S)$ denote $\sum_{s\in S}\midweight(s)$.
Define a \defn{$c$-balanced vertex separator} to be a subset of vertices $C\subset V$
such that $V\setminus C$ can be partitioned into two sets $A$ and $B$
with no edges between them and $\max\{|A|,|B|\}\leq(1-c)n$.
The \emph{size} of the vertex separator is
$\midweight(C)$.
In many applications, $\midweight(\cdot)=1$,
but we present our results in this more general form
where the cost of the separator may vary.

Small vertex separators are known to exist for many classes of graphs
with uniform weights $\midweight(\cdot)=1$.
Planar graphs are known to admit $\frac13$-balanced vertex separators of size $O(\sqrt{n})$~\cite{ungar1951theorem,lipton1979separator},
genus-$g$ graphs are known to admit $\frac13$-balanced vertex separators of size $O(\sqrt{gn})$~\cite{gilbert1984separator},
and similar results are known for minor-free graphs~\cite{kawarabayashi2010separator}, geometric graphs~\cite{miller98,spalding2025reweighted}, induced-minor-free graphs~\cite{korhonen2024},
and other classes~\cite{matousek2014,davies2025}.
These small separators also have numerous applications.
To name a few, they naturally allow devising divide and conquer algorithms,
dynamic programming algorithms,
subexponential algorithms for NP-hard problems on special graphs,
and maximum matching algorithms~\cite{separatorsbook,lipton1980,korhonen2024}.

Although many classes of graphs are known to admit small separators,
algorithms for constructing them are often quite slow.
Thus, we naturally turn to the question of approximation algorithms:
Given a graph $G$, can we approximate the minimum $c$-balanced vertex separator?
It turns out that this is essentially impossible as-stated:
It is NP-hard to approximate the minimum $c$-balanced vertex separator up to an \emph{additive} approximation
of $O(n^{\frac12-\epsilon})$ for any $\epsilon>0$~\cite{bui1992finding}.
Instead, we turn to \emph{pseudo}-approximations (also called bi-criteria approximations):
For a graph $G$ whose minimum $c$-balanced vertex separator has size $\OPT_c$,
we can produce a $\Theta(c)$-balanced vertex separator with size close to $\OPT_c$.

Our main result is as follows:

\begin{theorem}
\label{thm:main}
For any $\varepsilon\in[\Theta(1/\log n),\Theta(1)]$
and any constant $c\in(0,1/2)$,
there is a Monte-Carlo randomized algorithm (succeeding w.h.p.) that,
given a graph $G=(V,E)$
and integer vertex weights $\midweight:V\to\NN$,
finds a $\Theta(1)$-balanced vertex separator of size at most
$O(\OPT_c\cdot\sqrt{(\log n)/\varepsilon})$,
where
$\OPT_c$ is the minimum size of a $c$-balanced vertex separator.
The algorithm runs in time
$O\left(n^{O(\varepsilon)}\cdot\tflow(n,m)\cdot{\rm polylog}(n\max_i\midweight(i))\right)$,
where $\tflow(n,m)$ is the time to solve a max-flow problem on a graph with $n$ vertices and $m$ edges
with capacities all ${\rm poly}(n)$.
\end{theorem}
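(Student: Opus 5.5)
We will follow the Arora--Kale matrix multiplicative weights (MMWU) approach to balanced separators, adapted to vertex weights. The first step is to write the SDP relaxation of Feige--Hajiaghayi--Lee. Each vertex $v$ gets a vector $x_v$. We impose $\ell_2^2$ triangle inequalities. Vertex cost is modelled through per-vertex ``lengths'' $\eta_v\ge 0$: every edge $(u,v)$ satisfies $\|x_u-x_v\|^2\le \eta_u+\eta_v$, and the objective is $\sum_v \midweight(v)\eta_v$. A spreading constraint $\sum_{u,v}\|x_u-x_v\|^2\ge \Omega(c)n^2$ enforces balance. A $c$-balanced separator of size $\OPT_c$ gives a feasible solution of value $O(\OPT_c)$, so the SDP value is at most $O(\OPT_c)$. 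We then guess $\OPT_c$ up to a factor $2$ by binary search over $O(\log(n\max_i\midweight(i)))$ values; this accounts for that polylog factor in the running time.

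Next we run MMWU on the primal--dual pair. Each round produces a density matrix $X^{(t)}\propto\exp(-\eta\sum_{s<t}\text{(feedback)})$. We represent it implicitly by Gaussian projections onto $d=O(\log n/\varepsilon)$ (or polylog) dimensions, computed by Johnson--Lindenstrauss sketching and Taylor expansion of the matrix exponential. The oracle then does one of two things. Either it certifies that the embedding is ``well spread with small cost'', which yields a separator; or it returns a dual feedback matrix. That matrix is built from a flow-based structure, namely a vertex-capacitated multicommodity flow routed as an expander-type embedding of matching demands, and it penalises the current embedding. Matching the standard analysis, the oracle must have width $\rho$ small enough that $O(\rho^2\log n)$ rounds, or $O(\log n)$ with the Arora--Kale bounded-width trick, suffice.

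The oracle follows the ARV ``project and match'' chain argument, which we run with a max-flow computation in place of matching. We pick a random direction $g$ and take sets $L,R$ of $\Omega(n)$ vertices at the two ends of the projection, separated by $\sigma/\sqrt{d}$. We then solve a single-commodity vertex-capacitated max-flow from $L$ to $R$, with capacities $\propto\midweight(v)/\OPT$; this is one call to $\tflow$, after the standard vertex-splitting reduction. There are two cases. If the max-flow is small, the min vertex cut is a balanced separator, since $L$ and $R$ are of size $\Omega(n)$, and it has cost $O(\OPT\cdot\sqrt{\log n/\varepsilon})$ by the chosen scaling. If the flow is large, its flow paths give a matching-like demand graph whose Laplacian serves as the feedback matrix.

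The main obstacle is the parameter $\varepsilon$. The standard ARV analysis in $\sqrt{\log n}$ dimensions needs chains of length about $\sqrt{\log n}$, and the oracle's width blows up. We handle this by allowing the embedding $n^{O(\varepsilon)}$ ``effective'' directions or repetitions. Concretely, following the Sherman-style trade-off, we combine $k=n^{O(\varepsilon)}$ flow solutions in each oracle call. This makes chains of length $\sqrt{\log n/\varepsilon}$ suffice: the measure-concentration step of ARV loses only $n^{-O(\varepsilon)}$ per step, and the $n^{O(\varepsilon)}$ repetitions compensate for it. Proving this concentration-with-repetition lemma, and checking that MMWU still converges in polylog rounds with the resulting width, is where we expect the real work to lie. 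The total time is $n^{O(\varepsilon)}\cdot{\rm polylog}$ max-flows. Success w.h.p. follows by repeating the random projections $O(\log n)$ times.
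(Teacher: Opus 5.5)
\paragraph{Where the proposal breaks.} Your skeleton matches the paper's: the relaxation, binary search on $\alpha$, MMWU with JL-projected Gram vectors, one single-commodity vertex-capacitated max-flow per random direction, and Sherman-style chaining repeated $n^{O(\varepsilon)}$ times. But your large-flow step does not work as stated, and the part you leave open is where the proof actually lives.

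Using the Laplacian of the flow's demand graph $D$ as feedback only helps if the flow joins pairs that are far apart in the current embedding. The paper makes this a test. If $\sum_{x\in A,y\in B}d_{xy}\|x-y\|^2\ge 2\alpha$, it returns $N=\sum_pf_pT_p-\calL(G^\lambda)=-\calL(D)$. Here $\lambda$ is read off the flow on the edges $(\bar x,y)$, the split-edge capacities $\midweight(x)/2$ give ${\tt deg}(G^\lambda)\le\midweight$, and the $T_p$ are taken along the flow paths.

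If the test fails, nothing forces $\calL(D)\bullet X\ge\alpha$, since the flow may run entirely between pairs at squared distance $\le\Delta$. In that case the paper:
\begin{itemize}
\item takes a maximal matching among pairs with $\|x-y\|^2\le\Delta$ and projection gap $\ge\sigma$;
\item shows it has size $\ge c'n/3$ by a counting argument that uses the per-terminal capacity $\beta\ge 6\alpha/(c'n\Delta)$ (otherwise at least $\tfrac13c'\beta n$ units of flow would join long pairs and the test would have passed);
\item feeds these matchings to the chaining theorem, obtaining paths that violate the $\ell_2^2$ triangle inequality by $\Delta$.
\end{itemize}
The feedback is then $\sum_{p\in P}\frac{2\alpha}{|P|\Delta}T_p$, built from triangle-inequality terms, not from a demand Laplacian. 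Your sketch says neither how flow output becomes matchings nor how violating paths become feedback. The multicommodity ``expander embedding'' flow in your second paragraph plays no role.

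\paragraph{The trade-off is inverted.} The ratio $\sqrt{\log n/\varepsilon}$ comes from the short-pair threshold $\Delta=\sqrt{\varepsilon/\log n}$ through $\beta=\Theta(\alpha/(n\Delta))$. This $\beta$ bounds both the cut, $\midweight(U)\le 2c'n\beta$, and the width of the flow feedback. The chain length is $K=\Theta(\Delta\log n)=\Theta(\sqrt{\varepsilon\log n})$. Chaining then yields $e^{-\Theta(K^2)}n=n^{1-\Theta(\varepsilon)}$ violating paths, hence $n^{\Theta(\varepsilon)}$ repetitions. With your length $\sqrt{\log n/\varepsilon}$ the yield would be $e^{-\Theta(K^2)}=n^{-\Theta(1/\varepsilon)}$, i.e.\ nothing. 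The loss is not ``$n^{-O(\varepsilon)}$ per step''; it is $e^{-\Theta(K^2)}$ over the whole chain.

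The factor $1/|P|$ in $f_p$ sets the width $\rho=O\big(\alpha\sqrt{\log n}/(n^{1-\Theta(\varepsilon)}\sqrt\varepsilon)\big)$. That gives $n^{\Theta(\varepsilon)}{\rm polylog}(n)$ MMWU rounds rather than polylog, which is still within budget.

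\paragraph{The relaxation needs norm constraints.} Your relaxation lacks $\|v_i\|^2=1$. Without it, let $Q$ be a cluster with $|Q|=o(n)$ hanging off a single cut vertex $p$ of a bounded-degree expander. You can move $Q$ to squared distance $\Theta(cn/|Q|)$. This satisfies spreading and the triangle inequalities at cost $O(n/|Q|)\ll\OPT_c=\Theta(n)$. The paper also imposes spreading on every set of size $\ge(1-c/4)n$, because the MMWU iterates only satisfy ${\tt Tr}(X)=n$. The oracle's easy case uses these constraints so that it can restrict attention to vectors of squared norm at most $4/c$.
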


By applying the almost-linear time maximum flow algorithm
of Chen, Kyng, Liu, Peng, Gutenberg, and Sachdeva~\cite{linear-maxflow},
we know that $\tflow=O(m^{1+o(1)})$.

\begin{table}[htpb]
    \centering
    \begin{tabular}{|>{\rule[-12pt]{0pt}{30pt}}l|l|l|m{5.25cm}|}
    \hline
    \textbf{Separator Size} & \textbf{Running Time} & \textbf{Ref} & \textbf{Notes} \\
    \hline
    $O(\OPT\cdot\log n)$ & $\text{poly}(n)$ & \cite{leighton1999multi} & \\
    \hline
    $O(\OPT\cdot\sqrt{\log n})$ & Time to solve $O(n)$ SDPs & \cite{FeigeHL08,agarwal2005} & Best known general SDP solving time is $\widetilde{O}(\sqrt{n}(mn^2 + m^\omega + n^\omega))$~\cite{jiang2020faster}. \\
    \hline
    $\widetilde{O}(\OPT^2)$ & $\widetilde{O}(\OPT^3 m)$ & \cite{brandt2019approximating} & Both runtime and approximation ratio depend on $\OPT$. \\
    \hline
    $O\left(\OPT\cdot\sqrt{\frac{\log n}{\varepsilon}}\right)$ & $O(n^{1+O(\varepsilon)}m^{1+o(1)})$ & \cite{LTW} & Uses MMWU to solve an SDP relaxation in almost-linear time. Requires $O(n)$ runs for balance, resulting in the extra factor of $n$. Works for any $\varepsilon\in[1/\Theta(\log n),\Theta(1)]$. \\
    \hline
    $\widetilde{O}(\OPT)$ & $\widetilde{O}(m^{1+o(1)})$ & \cite{louis2010cutmatchinggamesdirectedgraphs} & Unpublished work. Details for applying to balanced vertex-separator problem are not provided.
    Requires almost-linear time max-flow~\cite{linear-maxflow}.\\
    \hline
    $O\left(\OPT\cdot\sqrt{\frac{\log n}{\varepsilon}}\right)$ & $O(n^{O(\varepsilon)}m^{1+o(1)})$ & Ours & Works for any $\varepsilon\in[1/\Theta(\log n),\Theta(1)]$. \\
    \hline
    \end{tabular}
    \caption{Previous pseudo-approximation algorithms for the balanced vertex separator problem.
    $\OPT$ denotes the size of the smallest $\frac23$-balanced vertex separator.
    The notation $\widetilde{O}$ hides polylogarithmic factors.
    }
    \label{tab:previous-results}
\end{table}

Although a long series of polynomial-time
pseudo-approximation algorithms exist for the balanced vertex separator problem
(see Table~\ref{tab:previous-results}),
the best approximation ratio attainable in almost-linear time
is a (large) polylogarithmic ratio, informally stated without proof in the conclusion of an unpublished result by Louis~\cite{louis2010cutmatchinggamesdirectedgraphs}.
Consequently, the following immediate corollary to \cref{thm:main} (using $\varepsilon=\Theta(1/f(n))$
for a slow-growing function $f(n)$)
is a significant step forward:

\begin{corollary}
Let $f(n)=\omega(1)$ be a function (e.g., $f(n)=\log\log n$).
There is a Monte-Carlo randomized algorithm (succeeding w.h.p.) that,
given a graph $G=(V,E)$
and polynomially-bounded integer vertex weights $\midweight:V\to[O(n^{O(1)})]$,
finds a $\Theta(1)$-balanced vertex separator of size at most
$O(\OPT_c\cdot\sqrt{\log n}\cdot f(n))$.
The algorithm runs in $O(m^{1+o(1)})$ time,
plus the time to evaluate $f(n)$.
\end{corollary}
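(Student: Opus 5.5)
The corollary follows from \cref{thm:main} by choosing $\varepsilon$ as a function of $n$, combined with the almost-linear max-flow bound of~\cite{linear-maxflow}. We may assume $f(n)\to\infty$ is monotone after replacing it by $\min\{f(n),\sqrt{\log n}\}$; this truncation only shrinks the target bound, so proving the corollary for it suffices. Set
\[
\varepsilon \;=\; \max\Bigl\{\frac{1}{f(n)^2},\ \frac{\kappa}{\log n}\Bigr\},
\]
where $\kappa$ is the constant implicit in the lower end of the allowed range $[\Theta(1/\log n),\Theta(1)]$. Since $f(n)=\omega(1)$, for all sufficiently large $n$ this value lies below the constant upper end. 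Hence it is admissible in \cref{thm:main}, and small $n$ can be handled by brute force in $O(1)$ time.

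\emph{Approximation.} \cref{thm:main} gives a $\Theta(1)$-balanced separator of size $O(\OPT_c\sqrt{(\log n)/\varepsilon})$. If $\varepsilon=1/f(n)^2$, this is $O(\OPT_c\sqrt{\log n}\, f(n))$. Otherwise $\varepsilon=\kappa/\log n \ge 1/f(n)^2$, so $\sqrt{1/\varepsilon}\le f(n)$ and the same bound holds; in fact the bound is then $O(\OPT_c\log n)$, and $\log n\le \sqrt{\log n}\,f(n)$ since $f(n)\ge\sqrt{\log n}$ in this case.

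\emph{Running time.} The factor $n^{O(\varepsilon)}$ equals $2^{O(\varepsilon\log n)}$.
\begin{itemize}
\item When $\varepsilon=1/f(n)^2$, the exponent is $O(\log n/f(n)^2)=o(\log n)$, so the factor is $n^{o(1)}\le m^{o(1)}$ (assuming $G$ is connected, or else working per component so that $m\ge n-1$).
\item When $\varepsilon=\kappa/\log n$, the factor is $O(1)$.
\end{itemize}
Next, $\tflow(n,m)=O(m^{1+o(1)})$ by~\cite{linear-maxflow}. Since the weights are polynomially bounded, ${\rm polylog}(n\max_i \midweight(i))={\rm polylog}(n)=m^{o(1)}$. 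Multiplying the three factors gives $O(m^{1+o(1)})$, plus the time to compute $f(n)$ and hence $\varepsilon$.

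\emph{Main obstacle.} The only delicate point is checking that $n^{O(\varepsilon)}$ is genuinely $m^{o(1)}$ and that the chosen $\varepsilon$ stays inside the admissible range of \cref{thm:main}. This depends on $f=\omega(1)$ and on the truncation and clamping described above. Everything else is direct substitution.
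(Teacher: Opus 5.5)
Your proposal is correct and follows the paper's route: substitute an $f$-dependent $\varepsilon$ into \cref{thm:main}, so that $n^{O(\varepsilon)}=n^{o(1)}$, and use the almost-linear max-flow of~\cite{linear-maxflow}; the paper takes $\varepsilon=\Theta(1/f(n))$, which even gives the slightly stronger bound $O(\OPT_c\sqrt{\log n\cdot f(n)})$. Clamping $\varepsilon$ into the admissible range is care the paper omits, but note that replacing $f$ by $\min\{f(n),\sqrt{\log n}\}$ does not make it monotone; this is harmless, since monotonicity is never used.
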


On the other hand, the best-known approximation factor
for the balanced vertex separator problem is $O(\sqrt{\log n})$~\cite{FeigeHL08},
and we can match that result with an algorithm
that gets arbitrarily close to almost-linear time:

\begin{corollary}
For any constant $\delta>0$,
there is a Monte-Carlo randomized algorithm (succeeding w.h.p.) that,
given a graph $G=(V,E)$
and polynomially-bounded integer vertex weights $\midweight:V\to[O(n^{O(1)})]$,
finds a $\Theta(1)$-balanced vertex separator of size at most
$O_\delta(\sqrt{\log n}\cdot\OPT_c)$.
The algorithm runs in $O(m^{1+\delta})$ time.
\end{corollary}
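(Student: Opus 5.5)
This corollary is a direct specialization of \cref{thm:main}, so the plan is to instantiate the theorem with a constant $\varepsilon$ chosen from $\delta$ and then account for every factor in the running time.

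\textbf{Choice of $\varepsilon$.} Let $C_0$ be the absolute constant hidden in the $n^{O(\varepsilon)}$ factor of \cref{thm:main}. I would set $\varepsilon=\min\{\delta/(2C_0),\,\varepsilon_{\max}\}$, where $\varepsilon_{\max}=\Theta(1)$ is the upper end of the range allowed by the theorem. This $\varepsilon$ is a constant depending only on $\delta$. For $n$ large enough it also lies above the lower end $\Theta(1/\log n)$; constant-size inputs can be solved by brute force in $O(1)$ time. With this choice, \cref{thm:main} gives a $\Theta(1)$-balanced vertex separator of size
\[
O\bigl(\OPT_c\cdot\sqrt{(\log n)/\varepsilon}\bigr)=O_\delta\bigl(\sqrt{\log n}\cdot\OPT_c\bigr),
\]
because $1/\sqrt{\varepsilon}$ is a constant depending only on $\delta$.

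\textbf{Running time.} The three factors in the bound of \cref{thm:main} are handled as follows.
\begin{itemize}
\item The first factor is $n^{C_0\varepsilon}\le n^{\delta/2}$.
\item Since the weights are polynomially bounded, $\max_i \midweight(i)\le n^{O(1)}$, so ${\rm polylog}(n\max_i\midweight(i))={\rm polylog}(n)$.
\item Plugging in the almost-linear max-flow algorithm of \cite{linear-maxflow} gives $\tflow(n,m)=O(m^{1+o(1)})$.
\end{itemize}
The total is therefore $O(n^{\delta/2}\,m^{1+o(1)}\,{\rm polylog}(n))$.

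\textbf{Reduction to $O(m^{1+\delta})$.} I may assume $G$ has no isolated vertices, or more generally that $n=O(m)$. Otherwise the isolated vertices can be set aside: they never need to be in the separator, and they can be distributed greedily afterward to keep both sides balanced. This step needs a brief check, because the balance constraint is defined with respect to $n$. Given $n=O(m)$, the factors $m^{o(1)}$ and ${\rm polylog}(n)$ are each $O(m^{\delta/2})$ for large $m$. The bound then becomes $O(m^{\delta/2}\cdot m^{1+\delta/2})=O(m^{1+\delta})$.

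\textbf{Main obstacle.} The only step that is not pure arithmetic is justifying $n=O(m)$. I would handle it by the isolated-vertex preprocessing: delete the isolated vertices, solve on the remaining graph, and then assign the isolated vertices to whichever side keeps the partition $\Theta(1)$-balanced. Everything else follows directly from the statement of \cref{thm:main}.
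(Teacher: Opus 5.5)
Your core argument is the intended one. The paper gives no separate proof: the corollary is \cref{thm:main} with a constant $\varepsilon=\Theta(\delta)$, the bound $\tflow(n,m)=O(m^{1+o(1)})$ from almost-linear max-flow, and ${\rm polylog}(n\max_i\midweight(i))={\rm polylog}(n)$ for polynomial weights, with the $m^{o(1)}$ and polylogarithmic factors absorbed into $m^{\delta/2}$. The paper simply takes $n\le m$ for granted (it writes this in the proof of \cref{lemma:oracle-bounds}), so your isolated-vertex preprocessing is not needed. However, as you describe it, that step would fail.

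Suppose you delete the isolated vertices and run the algorithm on the remaining graph $G_0$ with $n_0$ vertices. The guarantee is then relative to $\OPT_c(G_0)$, whose balance constraint is $(1-c)n_0$ rather than $(1-c)n$, and this can be unboundedly larger than $\OPT_c(G)$. For instance, let $G_0$ be a bounded-degree expander on $n/2$ vertices and let the other $n/2$ vertices be isolated. Then $\OPT_c(G)=0$ (put all of $G_0$ on one side), whereas every $\Theta(1)$-balanced separator of $G_0$ has size $\Omega(n)$. Redistributing the isolated vertices afterwards cannot repair this.

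The right preprocessing is a case split with no deletion:
\begin{itemize}
\item If $n_0\le(1-c)n$, output the empty separator. Put $V(G_0)$ plus isolated vertices on one side, giving it $\max\{n_0,\lfloor n/2\rfloor\}$ vertices, and the remaining isolated vertices on the other. Both sides then have at most $(1-c)n$ vertices for large $n$, so this is $c$-balanced at cost $0$.
\item Otherwise $n<n_0/(1-c)\le 2m/(1-c)$, so $n=O(m)$ and you can apply \cref{thm:main} to $G$ itself.
\end{itemize}
In any event, reading the $n$ vertex weights already costs $\Omega(n)$ time. An $O(m^{1+\delta})$ bound therefore implicitly presupposes that $n$ is not much larger than $m$, which is why the paper does not bother with this step.
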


Finally, by choosing $\varepsilon=\Theta(1/\log n)$,
we can attain a time complexity that is entirely dependent on $\tflow(n,m)$:
\begin{corollary}
There is a Monte-Carlo randomized algorithm (succeeding w.h.p.) that,
given a graph $G=(V,E)$
and polynomially-bounded integer vertex weights $\midweight:V\to[O(n^{O(1)})]$,
finds a $\Theta(1)$-balanced vertex separator of size at most
$O(\OPT_c\cdot\log n)$.
The algorithm runs in $O(\tflow(n,m)\cdot{\rm polylog}(n))$ time.
\end{corollary}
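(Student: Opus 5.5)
This corollary follows from \cref{thm:main} by fixing $\varepsilon=\Theta(1/\log n)$, the smallest value the theorem allows. There are then two things to check: the approximation ratio and the running time.

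\emph{Approximation ratio.} With $\varepsilon=\Theta(1/\log n)$ the separator size bound from \cref{thm:main} becomes
\[
O\bigl(\OPT_c\cdot\sqrt{(\log n)/\varepsilon}\bigr)=O\bigl(\OPT_c\cdot\sqrt{\log^2 n}\bigr)=O(\OPT_c\cdot\log n).
\]
The output is still a $\Theta(1)$-balanced separator, and it is found w.h.p.

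\emph{Running time.} The factor $n^{O(\varepsilon)}$ equals $2^{O(\varepsilon\log n)}=2^{O(1)}=O(1)$. The weights are polynomially bounded, $\max_i \midweight(i)\le n^{O(1)}$, so ${\rm polylog}(n\max_i\midweight(i))={\rm polylog}(n^{O(1)})={\rm polylog}(n)$. The total time is therefore $O(\tflow(n,m)\cdot{\rm polylog}(n))$. The capacities in the max-flow instances are ${\rm poly}(n)$, matching the hypothesis of \cref{thm:main}.

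The only point needing care is that the hidden constant in $\varepsilon=\Theta(1/\log n)$ must be chosen inside the admissible range $[\Theta(1/\log n),\Theta(1)]$ of \cref{thm:main}. Any fixed constant times $1/\log n$ that the theorem permits works, and all constants are absorbed into the $O(\cdot)$ notation.
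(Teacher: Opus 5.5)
Your proposal is correct and is exactly the paper's derivation: set $\varepsilon=\Theta(1/\log n)$ in \cref{thm:main}, so that $n^{O(\varepsilon)}=O(1)$, $\sqrt{(\log n)/\varepsilon}=O(\log n)$, and the polynomially bounded weights make ${\rm polylog}(n\max_i\midweight(i))={\rm polylog}(n)$. Your remark about staying within the admissible range $[\Theta(1/\log n),\Theta(1)]$ covers the only point that needs care.
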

This last corollary will be useful if $\tflow(n,m)$ is ever improved to be near-linear.

\subsection{Technical Overview}

Our techniques are comprised of three elements:
A convex relaxation, a regret-minimization framework for certain convex relaxations, and an efficient implementation
of an important piece for the framework.

First, we give a semidefinite programming relaxation of the minimum $c$-balanced vertex separator problem in \cref{sec:prelim}.
This relaxation is similar to the relaxation of the minimum $c$-balanced \emph{edge} separator problem used by Arora and Kale~\cite{AK}.

Directly solving a semidefinite program would be too slow, so we instead deploy the Matrix Multiplicative Weight Update framework of Arora and Kale for approximately solving combinatorial optimization problems~\cite{AK}, which we review in \cref{sec:MW}. Deploying this framework efficiently also requires random projections (which is why our algorithm is randomized), which we discuss in \cref{sec:low-rank}.

In order to implement the Matrix Multiplicative Weight Update framework,
we must provide an ``oracle''
that either provides a special ``feedback matrix'',
or terminates early and provides an approximate combinatorial solution.
Implementing this oracle with our desired approximation factor
requires building on the methods of Sherman, Lau, Tung, and Wang, and Kolmogorov~\cite{Sherman,LTW,vnk:sparsest-cut},
while also calling a subroutine for single-commodity maximum flow.
The implementation of the oracle is detailed in \cref{sec:oracle-impl}.

\section{Preliminaries}
\label{sec:prelim}

We will now define the main convex relaxation we will work with.
Let $\xi:=\frac94c^2$,
let $\calS:=\{S\subseteq V : |S|\geq\left(1-\frac c4\right)n\}$,
and let $\calP$ be the set of sequences $p=(p_0,\ldots,p_{\ell(p)})$ of distinct nodes in $G$.
We use the notation $p=(p_0,\dots,p_{\ell(p)})$ for the sequence of vertices along a path $p\in\calP$.
We consider the following relaxation of the $c$-balanced vertex separator problem:

\begin{subequations}\label[problem]{eq:SDP}
\begin{align}
\min \sum_{i\in V} & \midweight(i)\cdot x_i   &   \label{eq:SDP:a} \\
x_i+x_j & \ge ||v_i-v_j||^2                                                                               &x_i+x_j-L_{ij}\bullet X&\ge 0                           && \forall ij\in E \label{eq:SDP:b} \\
||v_i||^2&=1                                                                               &X_{ii}&=1                           && \forall i\in V \label{eq:SDP:c} \\
\sum_{j=1}^{\ell(p)} ||v_{p_j}-v_{p_{j-1}}||^2 & \ge || v_{p_{\ell(p)}}-v_{p_0}||^2           &T_p\bullet X &\ge 0                   && \forall p\in\calP \label{eq:SDP:d} \\
\sum_{i,j\in S:i<j}||v_i-v_j||^2 &\ge \xi n^2                                                   &K_{S}\bullet X&\ge \xi n^2                 && \forall S\in\calS \label{eq:SDP:e} \\
            x                 &\ge 0                                                             &X\succeq  0,x&\ge 0 \label{eq:SDP:f}
\end{align}
\end{subequations}

The left and right sides here both denote the same optimization problem.
The difference is notational.
The matrix $X$ is defined as $X:=V^T V$, where $V$ is the $n\times n$ matrix with columns $v_1,\ldots,v_n$.
We have  $L_{ij}=\calL(G_{(i,j)})$
and $T_p=\calL(G_p)-\calL(G_{(p_0,p_{\ell(p)})})$
where $\calL(\cdot)$ is the Laplacian of the corresponding graph (the diagonal degree matrix minus the adjacency matrix) and $G_q$ for a path $q$ is the undirected unweighted graph containing all edges of $q$.
Finally, we define $K_{S}(i,i)=|S|-1$ for $i\in S$,
$K_{S}(i,j)=-1$ for distinct $i,j\in S$,
and $K_{S}(i,j)=0$ in all other cases.

\begin{proposition}
The optimal value of~\cref{eq:SDP} divided by 4 is a lower bound on the minimum $c$-{\sc Balanced Vertex Separator} problem.
\end{proposition}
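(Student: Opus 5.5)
Given an optimal $c$-balanced vertex separator $C$ with partition $V\setminus C=A\cup B$, no edges between $A$ and $B$, and $\max\{|A|,|B|\}\le(1-c)n$, I will build a feasible solution of \cref{eq:SDP} whose objective is at most $4\midweight(C)$. The natural choice is one-dimensional vectors: fix a unit vector $u$ and set $v_i=u$ for $i\in A$, $v_i=-u$ for $i\in B$, and $v_i=u$ for $i\in C$. Then set $x_i=4$ for $i\in C$ and $x_i=0$ otherwise. The objective is $4\midweight(C)$, so it suffices to check feasibility.

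Constraints \eqref{eq:SDP:c}, \eqref{eq:SDP:f} and the PSD condition are immediate, since $X=V^TV$. For \eqref{eq:SDP:b}, consider an edge $ij$. If both endpoints lie outside $C$, they lie on the same side, so $\|v_i-v_j\|^2=0$. Otherwise some endpoint is in $C$, so $x_i+x_j\ge4\ge\|v_i-v_j\|^2$ because all vectors are unit. For \eqref{eq:SDP:d}, the squared distances take values in $\{0,4\}$, and the pseudo-metric $d(i,j)=\|v_i-v_j\|^2$ is just $4\cdot[\text{different sign}]$. This is a cut metric and satisfies the triangle inequality, so by induction along the sequence the path inequality holds. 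Note that \eqref{eq:SDP:d} is stated for all sequences of distinct nodes, not only graph paths, so the triangle inequality is exactly what is needed.

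The main step is the spreading constraint \eqref{eq:SDP:e}. Fix $S$ with $|S|\ge(1-\frac c4)n$. Write $P=\{i:v_i=u\}=A\cup C$ and $N=B$, and treat $C$ as merged into $A$. Then $\sum_{i<j\in S}\|v_i-v_j\|^2=4|S\cap P|\,|S\cap N|$. The danger is that $A\cup C$ may be too large. To fix this, I would choose which side $C$ joins more carefully: put $C$ with the smaller of $A,B$ when possible, or split $C$ between the sides so that both sides have size at least roughly $cn$ and at most $(1-c)n$. Splitting is free, because $x_i=4$ on $C$ already covers any edge with an endpoint in $C$. Assume $C$ is distributed so that $|P|,|N|\ge cn$; this is possible since $|A|,|B|\le(1-c)n$ implies $|A\cup C|\ge cn$ and $|B\cup C|\ge cn$, so moving part of $C$ can balance the sides.

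Removing at most $\frac c4 n$ vertices to pass to $S$ then gives $|S\cap P|,|S\cap N|\ge\frac34cn$. Hence $4|S\cap P||S\cap N|\ge 4\cdot\frac34cn\cdot\frac34cn=\frac94c^2n^2=\xi n^2$, matching the definition of $\xi$ exactly; this match suggests the intended construction is this one. The one point to double-check is that the split of $C$ really achieves $\min(|P|,|N|)\ge cn$. Start with $P=A\cup C$, $N=B$. If $|B|\ge cn$ we are done, since $|A\cup C|\ge n-(1-c)n=cn$. If $|B|<cn$, move vertices of $C$ to $N$ one at a time until $|N|\ge cn$. At that point $|N|\le cn\le(1-c)n$, so $|P|\ge n-cn\ge cn$, using $c<\frac12$. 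Therefore $\OPT/4$-scaling gives SDP value at most $4\,\OPT_c$, which is the claim.
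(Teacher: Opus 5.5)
Your proof is correct and is essentially the paper's argument: the same one-dimensional $\pm1$ embedding with $x_i=4$ on $C$ and $0$ elsewhere, $C$ distributed between the two sides so that each side has at least $cn$ vertices, and then the same $\frac34 cn$ count for the spreading constraint; the only difference is that the paper takes the most balanced extension $(\hat A,\hat B)$ of $(A,B)$, while you move vertices of $C$ greedily. One small imprecision, which the paper shares in its step from $\min\{|\hat A|,|\hat B|\}\ge\lfloor n/2\rfloor$ to $\ge cn$: when you stop, $|N|=\lceil cn\rceil$ rather than $|N|\le cn$, so $|P|\ge cn$ requires $\lceil cn\rceil\le(1-c)n$, which holds once $n\ge 1/(1-2c)$.
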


\begin{proof}
Let $(A,B),C$ be an optimal $c$-balanced vertex separator,
so $\max\{|A|,|B|\}\leq(1-c)n$. We need to construct a feasible solution of \cref{eq:SDP} of value at most $4\midweight(C)$.
In the construction below, the last $n-1$ coordinates of vectors $v_i$ will be set to $0$; we can thus assume for brevity of notation that $v_i\in\mathbb R$ rather than $v_i\in\mathbb R^n$.

Let us choose a partition $(\hat A,\hat B)$ of $V$ with $A\subseteq \hat A$, $B\subseteq \hat B$ which is the ``most balanced'', i.e.\  
it minimizes the absolute value of $|\hat A|-|\hat B|$.
In particular, if $\max\{|A|,|B|\}\le n/2$
then $\min\{|\hat A|,|\hat B|\}\ge\lfloor n/2 \rfloor$,
otherwise either 
$(\hat A,\hat B)=(A,V\setminus A)$ (if $|A|>n/2$) or 
$(\hat A,\hat B)=(V\setminus B,B)$ (if $|B|>n/2$).
In all three cases we have $|\hat A|\ge c n$ and $|\hat B|\ge cn$, since $c<1/2$.
Set 
$x_i=\begin{cases}0 & \mbox{if }i\in A\cup B \\ 4&\mbox{if }i\in C\end{cases}$ 
and 
$v_i=\begin{cases}-1 & \mbox{if }i\in \hat A \\ +1&\mbox{if }i\in \hat B\end{cases}$.
Checking conditions~\eqref{eq:SDP:b}, \eqref{eq:SDP:c}, \eqref{eq:SDP:d}, \eqref{eq:SDP:f} is straightforward.
(Note that for each edge $ij\in E$ with $v_i\ne v_j$ we must have either $i\in C$ or $j\in C$
since $E$ has no edges between $A$ and $B$, and hence $x_i+x_j\ge 4 \ge ||v_i-v_j||^2$.) Let us show~\eqref{eq:SDP:e}. Consider a set $S\subseteq V$
with $|S|\ge (1-\frac c 4) n$.
We have
        $|\hat A\cap S|
        \geq|\hat A|-\frac c4n\geq\frac{3c}4n$,
        and similarly for $\hat B\cap S$.
        Therefore,
        \begin{equation*}
        \begin{aligned}
        \sum_{i,j\in S:i<j}\|v_i-v_j\|^2
         \geq & \sum_{i\in \hat A\cap S,j\in \hat B\cap S}\|v_i-v_j\|^2
         =  \sum_{i\in \hat A\cap S}\sum_{j\in \hat B\cap S}4
         =  4|\hat A\cap S||\hat B\cap S|
         \geq  4\left(\frac{3c}4n\right)^2
        \end{aligned}
        \end{equation*}
        The last expression equals $\frac94c^2n^2
         =  \xi n^2$, and so eq. \eqref{eq:SDP:e} is indeed satisfied.
\end{proof}

The dual of~\cref{eq:SDP} is as follows. It has variables $y_i$ for every node $i\in V$, $\lambda_{ij}$ for every edge $\{i,j\}\in E$,
$f_p$ for every path $p$, and $z_S$ for every set $S$ of size at least $(1-c/4)n$.
Let ${\tt diag}(y)$ be the diagonal matrix with vector $y$ on the diagonal:
\begin{subequations}\label[problem]{eq:SDPdual}
\begin{align}
\max \sum_i y_i + \xi n^2\sum_S z_S \label{eq:SDPdual:a} \\
{\tt diag}(y)  + \sum_p f_p T_p + \sum_S z_S K_{S} &\preceq \sum_{ij} \lambda_{ij}L_{ij} \label{eq:SDPdual:b} \\
\sum_{j:\{i,j\}\in E} \lambda_{ij} &\le \midweight(i) \qquad\quad\forall i  \label{eq:SDPdual:c} \\
f_p,z_S,\lambda_{ij} &\ge 0 \qquad\quad \forall p,S,ij \label{eq:SDPdual:d}
\end{align}
\end{subequations}
For a vector $\lambda\ge 0$ let $G^\lambda\eqdef (V,E,\lambda)$ be the graph with edge weights $\lambda$.
Note that $\lambda$ satisfies~\cref{eq:SDPdual:c} if and only if ${\tt deg}(G^\lambda)\le \midweight$ component-wise,
where ${\tt deg}(G^\lambda)\in\mathbb R^V$ is the vector of node degrees in $G^\lambda$.
Also, $\sum_{ij} \lambda_{ij}L_{ij}=\calL(G^\lambda)$ is the Laplacian of $G^\lambda$.

\subsection{Matrix multiplicative weights algorithm (MMWU)}\label{sec:MW}
To solve the \cref{eq:SDP},
we will closely follow the methods of Arora and Kale
for solving the SDP relaxation of the
$c$-{\tt Balanced Separator} problem~\cite{AK}.
We will also apply related constructions and improvements for the sparsest cut problem
used by Sherman, Kolmogorov, and Lau, Tung and Wang~\cite{Sherman,LTW,vnk:sparsest-cut}.

Instead of solving optimization~\cref{eq:SDP} directly, we will use a binary search on the objective value.
Accordingly, let us fix value $\alpha\in[1,w(V)]$. We will present 
a procedure that will either (i) certify that the optimum value of~\cref{eq:SDP} is at least $\alpha/2$, or (ii) produce a $\Theta(1)$-balanced vertex separator of cost at most $\Theta(\kappa\alpha)$, for some parameter $\kappa$ to be specified later.
Clearly, by running this procedure
for $O(\log (n \max_i \midweight(i)))$ values of $\alpha$ we will obtain a $O(\kappa)$ pseudo-approximation to the minimum vertex separator problem.

The procedure above will be implemented via the Matrix Multiplicative Weight Update (MMWU) algorithm
of Arora and Kale~\cite{AK} given below.

\begin{algorithm}[H]
  \DontPrintSemicolon
\SetNoFillComment
\For{$t=1,2,\ldots,T$}
{
	Compute 
$$
W^{(t)}=\exp\left(\eta\sum_{r=1}^{t-1}N^{(r)}\right)\;,\quad
X^{(t)} = 
n \cdot \frac{W^{(t)}} {{\tt Tr}(W^{(t)})}
$$\\
Either output {\tt Fail}, or find ``feedback matrix'' $N^{(t)}$ of the form
\begin{eqnarray*}
 N^{(t)}&=&{\tt diag}(y)+\sum_p f_pT_p + \sum_S z_S K_{S} - \sum_{ij}\lambda_{ij}L_{ij} 
 \end{eqnarray*}
 where $f_p\ge 0$, $z_S\ge 0$, $\sum_i y_i+\xi n^2\sum_Sz_S\ge \alpha$, $\lambda_{ij}$ are non-negative variables with
 ${\tt deg}(G^\lambda)\le \midweight$,
 and $N^{(t)}\bullet X^{(t)}\le 0$.
}

      \caption{MMWU algorithm. 
      }\label{alg:MW}
\end{algorithm}

Note that we do not explicitly maintain variables $x_i$ in \cref{eq:SDP}.

\begin{theorem}
\label{thm:regret-bound}
Set $\eta=\frac{\delta}{2n\rho^2}$ and $T=\lceil\frac{4n^2\rho^2\ln n}{\delta^2}\rceil$.
Suppose that the algorithm does not fail during the first $T$ iterations, and the spectral norms are bounded as $||N^{(t)}||\le \rho$
for all $t\in[T]$.
Then optimal values of \cref{eq:SDP,eq:SDPdual}
are at least $\alpha-\delta$.
\end{theorem}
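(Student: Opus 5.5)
We will prove this with the standard Arora--Kale regret argument: the matrix multiplicative weights regret bound, combined with a scaled averaging of the feedback matrices, gives a feasible dual solution. Since $\sum_i y_i+\xi n^2\sum_S z_S$ is linear in the dual variables, averaging the oracle outputs of the $T$ rounds and then correcting for a small spectral error term will give a dual-feasible point of value at least $\alpha-\delta$. Weak duality will then transfer the bound to the primal \cref{eq:SDP}.

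\textbf{Step 1 (regret bound).} We will apply the MMWU regret inequality to the losses $M^{(t)}=-N^{(t)}/\rho$, whose eigenvalues lie in $[-1,1]$. With density matrices $P^{(t)}=X^{(t)}/n=W^{(t)}/{\tt Tr}(W^{(t)})$ and step size $\eta'=\eta\rho$, the standard bound (Arora--Kale, Thm.~3.1 style) reads
\[
\lambda_{\max}\Big(\sum_{t=1}^T N^{(t)}\Big)\le \sum_{t=1}^T N^{(t)}\bullet P^{(t)} + \eta\rho^2 T + \frac{\ln n}{\eta}.
\]
We will derive this from the potential ${\tt Tr}(W^{(t)})$ and Golden--Thompson, using $e^{\eta A}\preceq I+\eta A+\eta^2A^2$ for $\|\eta A\|\le 1$. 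The hypothesis $\eta\rho=\delta/(2n\rho)\le 1$ holds whenever $\delta\le 2n\rho$, and we may assume this, since otherwise $T=\lceil 4n^2\rho^2\ln n/\delta^2\rceil$ is degenerate and the statement is trivial. The oracle guarantees $N^{(t)}\bullet X^{(t)}\le 0$, so every inner-product term is nonpositive. Dividing by $T$ and substituting the given $\eta$ and $T$ then yields
\[
\lambda_{\max}(\bar N)\le \eta\rho^2+\frac{\ln n}{\eta T}\le \frac{\delta}{2n}+\frac{\delta}{2n}=\frac{\delta}{n},
\qquad \bar N:=\frac1T\sum_{t=1}^T N^{(t)}.
\]

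\textbf{Step 2 (building the dual solution).} Let $\bar y,\bar f,\bar z,\bar\lambda$ be the averages of the oracle variables over the $T$ rounds. The constraints ${\tt deg}(G^{\bar\lambda})\le \midweight$, $\bar f\ge 0$, $\bar z\ge0$ are convex, so they are preserved under averaging, and the objective satisfies $\sum_i \bar y_i+\xi n^2\sum_S\bar z_S\ge\alpha$. Step 1 gives $\bar N\preceq \frac{\delta}{n}I$, i.e.
\[
{\tt diag}(\bar y)+\sum_p\bar f_pT_p+\sum_S\bar z_SK_S-\sum_{ij}\bar\lambda_{ij}L_{ij}\preceq \frac{\delta}{n} I.
\]
Setting $y'=\bar y-\frac{\delta}{n}\mathbf 1$ makes \cref{eq:SDPdual:b} hold exactly, while \cref{eq:SDPdual:c} and \cref{eq:SDPdual:d} are unaffected. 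Note that $y$ is a free variable, so negative entries are allowed. The objective decreases by exactly $n\cdot\frac{\delta}{n}=\delta$, so this feasible dual point has value at least $\alpha-\delta$.

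\textbf{Step 3 (weak duality).} For any primal-feasible $(X,x)$ and our dual point, we will compute $\sum_i y'_i+\xi n^2\sum_S z_S\le \sum_i \midweight(i)x_i$. The argument takes the inner product of \cref{eq:SDPdual:b} with $X\succeq0$ and uses $X_{ii}=1$, $T_p\bullet X\ge0$, $K_S\bullet X\ge \xi n^2$, and
\[
\sum_{ij}\lambda_{ij}L_{ij}\bullet X\le\sum_{ij}\lambda_{ij}(x_i+x_j)\le\sum_i \midweight(i)x_i,
\]
where the last step uses $x\ge0$ and ${\tt deg}(G^\lambda)\le\midweight$. Hence the primal optimum is also at least $\alpha-\delta$.

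The main obstacle is Step 1, namely obtaining the regret inequality with the right constants under the normalization ${\tt Tr}(X^{(t)})=n$. The factor $n$ has to be tracked carefully so that the additive error comes out as $\delta/n$ per unit of trace, and it is precisely this that makes the given choices of $\eta$ and $T$ work.
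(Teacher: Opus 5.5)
Your proof is correct and follows essentially the same route as the paper's. Like the paper, you apply the MMWU regret bound with $P^{(t)}=X^{(t)}/n$ and matrices $\mp N^{(t)}/\rho$, average the oracle outputs, and shift $y$ by $-\frac{\delta}{n}\mathbf 1$ to get $\bar N\preceq\frac{\delta}{n}I$ and dual feasibility; the only differences are that you derive the regret inequality via Golden--Thompson instead of citing \cite[Corollary 3.2]{AK}, and you write out weak duality explicitly. One small correction: the case $\delta>2n\rho$ (where $\eta\rho>1$, a step-size condition the paper's citation also implicitly needs) is trivial not because $T$ is degenerate, but because then $\|N^{(t)}\|\le\rho<\frac{\delta}{2n}$ for every $t$, which already gives $\bar N\preceq\frac{\delta}{n}I$ directly.
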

\begin{proof}
Let us apply the standard MMWU regret bound \cite[Corollary 3.2]{AK} with matrices $P^{(t)}=\frac{1}{n} X^{(t)}$ and $M^{(t)}=-\frac 1\rho N^{(t)}$.
We obtain
\[
\sum_{t=1}^T M^{(t)}\bullet P^{(t)} \le \lambda_{n}\left(\sum_{t=1}^TM^{(t)}\right) + \eta\rho \cdot T + \frac {\ln n} {\eta \rho}
\]
where $\lambda_1(A)\ge\cdots\ge\lambda_{n}(A)$ denote eigenvalues of symmetric matrix $A$. Equivalently,
\[
\frac 1 \rho \lambda_{n}\left(-\sum_{t=1}^T N^{(t)}\right) 
\ge -\frac 1 {n\rho} \sum_{t=1}^T N^{(t)}\bullet X^{(t)} - \eta \rho T  - \frac {\ln n} {\eta \rho}
\]
Define variables $y_i, f_p, z_S, \lambda_{ij}$ via
$y_i=-\frac \delta n +\frac 1T\sum_t y^{(t)}_i$, 
$f_p=\frac 1T\sum_t f^{(t)}_p$, 
$z_S=\frac 1T\sum_t z^{(t)}_S$, 
$\lambda_{ij}=\frac 1T\sum_t \lambda^{(t)}_{ij}$.
Let $N={\tt diag}(y)+\sum_p f_pT_p + \sum_S z_S K_{S} - \sum_{ij}\lambda_{ij}L_{ij}$.
We can write
\[
\lambda_n(-N)=\lambda_n\left(\frac \delta n I-\frac 1T \sum_{t=1}^TN^{(t)}\right)
=\frac \delta n + \frac 1T \lambda_n\left(- \sum_{t=1}^TN^{(t)}\right)
\ge \frac \delta n - \eta \rho^2 - \frac{\ln n}{\eta T} 
\]
Using definitions of $\eta$ and $T$, we conclude that the last expression is nonnegative, and hence $N \preceq 0$.
This means that $y_i, f_p, z_S, \lambda_{ij}$
is a feasible solution of the dual \cref{eq:SDPdual}.
Therefore, optimal values of \cref{eq:SDP,eq:SDPdual}
are at least $\sum_i y_i + \xi n^2 \sum_S z_S=-n \cdot \frac \delta n  + \alpha=\alpha-\delta$,
as claimed.
\end{proof}
We will set $\delta=\frac 12 \alpha$, and 
require 
 Algorithm~\ref{alg:MW} to 
 produce a $\Theta(1)$-balanced vertex separator of cost at most $\Theta(\kappa \alpha)$ whenever it outputs  {\tt Fail}.

\section{Implementation of Algorithm~\ref{alg:MW}}
\label{sec:low-rank}

Consider the matrix $X=X^{(t)}$ computed at the $t$-th step of \cref{alg:MW}.
It can be seen that $X$ is positive semidefinite, so we can consider its Gram decomposition: $X=V^TV$.
Let $v_1,\ldots,v_n$ be the columns of $V$; clearly, they uniquely define $X$. 
Note that computing $v_1,\ldots,v_n$ requires matrix exponentiation, which is a tricky operation because of accuracy issues.
Furthermore, even storing these vectors requires $\Theta(n^2)$ space and thus $\Omega(n^2)$ time, which is too slow for our purposes.
To address these issues, Arora and Kale compute approximations $\tilde v_1,\ldots,\tilde v_n$ to these vectors
using the following result.
\begin{theorem}[{\cite[Lemma 7.2]{AK}}]\label{th:GramApproximation}
For any constant $c>0$ there exists an algorithm that does the following:
given values $\gamma\in(0,\tfrac 12)$, $\lambda_{\max}>0$, $\tau=O(n^{3/2})$ and
(implicit) matrix $A\in\mathbb R^{n\times n}$ of spectral norm $||A||\le \lambda_{\max}$, it computes a matrix $\tilde V\in\mathbb R^{d\times n}$ with column vectors
$\tilde v_1,\ldots,\tilde v_n$ 
of dimension $d=O(\tfrac{\log n}{\gamma^2})$ such that matrix $\tilde X=\tilde V^T\tilde V$ has trace $n$,
and with probability at least $1-n^{-c}$,
one has
\begin{subequations}\label{eq:GramApproximation}
\begin{align}
|\;||\tilde v_i||^2-||v_i||^2\;| &\;\;\le\;\; \gamma (||\tilde v_i||^2+\tau) && \forall i \\
|\;||\tilde v_i-\tilde v_j||^2-||v_i-v_j||^2\;| &\;\;\le\;\; \gamma (||\tilde v_i-\tilde v_j||^2+\tau) && \forall i,j
\end{align}
\end{subequations}
where $v_1,\ldots,v_n$ are the columns of a Gram decomposition of $X=n\cdot \frac{\exp(A)}{{\tt Tr}(\exp(A))}$.
The complexity of this algorithm equals the complexity of computing $kd$ matrix-vector products of the form $A\cdot u$, $u\in\mathbb R^n$,
where $k=O(\max\{\lambda_{\max}^2,\log \frac{n^{5/2}}{\tau}\})$.
\end{theorem}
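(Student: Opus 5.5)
The plan is to follow the standard Arora--Kale construction: a truncated Taylor series for the matrix exponential, combined with a Johnson--Lindenstrauss random projection.

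Write $X=n\exp(A)/{\tt Tr}(\exp(A))$ and use the Gram factor $V=\sqrt{n/{\tt Tr}(\exp(A))}\,\exp(A/2)$, which is valid because $\exp(A/2)$ is symmetric. Draw a random matrix $\Pi\in\mathbb R^{d\times n}$ with i.i.d.\ Gaussian (or $\pm1/\sqrt d$) entries and $d=O(\log n/\gamma^2)$. The exact vectors we would like to compute are $\Pi\exp(A/2)e_i$. By the Johnson--Lindenstrauss lemma and a union bound over the $O(n^2)$ vectors $v_i$ and $v_i-v_j$, all these norms are preserved up to a factor $1\pm\gamma/3$ with probability $1-n^{-c}$. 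Choosing $d$ with the constant depending on $c$ gives this.

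To avoid computing $\exp(A/2)$, we instead compute $\Pi\, p_k(A/2)$, where $p_k(z)=\sum_{r<k}z^r/r!$ is the truncated Taylor polynomial. Each row of $\Pi$ needs $k$ matrix--vector products with $A$ (using symmetry, $u^Tp_k(A/2)=(p_k(A/2)u)^T$), so the total cost is $kd$ products, as claimed.

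The truncation error is controlled as follows. Since $\|A/2\|\le\lambda_{\max}/2$, we have $\|\exp(A/2)-p_k(A/2)\|\le\sum_{r\ge k}(\lambda_{\max}/2)^r/r!$. Once $k\ge e^2\lambda_{\max}$, this tail is at most $e^{-k}$, and it remains small relative to $e^{-\lambda_{\max}/2}$. We also need the lower bound ${\tt Tr}(\exp(A))\ge ne^{-\lambda_{\max}}$. The normalizing factor can be estimated through $\sum_i\|\Pi p_k(A/2)e_i\|^2$, after which we rescale so that $\tilde X$ has trace exactly $n$.

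Putting these together, each projected vector differs from its exact projected counterpart in norm by an additive $O(\sqrt n\, e^{-k}\, e^{\lambda_{\max}/2}\cdot\sqrt{n}\cdots)$. Taking $k=O(\max\{\lambda_{\max}^2,\log(n^{5/2}/\tau)\})$ makes the squared additive error at most $\gamma\tau/3$. The $\lambda_{\max}^2$ term is generous and absorbs the $e^{\lambda_{\max}}$ factors coming from normalization.

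The step I expect to be the main obstacle is combining the multiplicative JL error, the additive truncation error, and the trace renormalization into the stated mixed bound $\gamma(\|\tilde v_i\|^2+\tau)$. I would handle this with the inequality $|a^2-b^2|\le 2|a-b|\max(a,b)+|a-b|^2$, together with AM--GM of the form $2|a-b|b\le\gamma b^2/3+3|a-b|^2/\gamma$. The key observation is that the additive part is only needed at the $\tau$ scale, and the renormalization factor lies in $1\pm\gamma/3$ by the JL guarantee applied to the trace.
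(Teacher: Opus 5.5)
The paper gives no proof of this statement and simply imports it from Arora--Kale \cite[Lemma 7.2]{AK}. Your plan is essentially that construction and is sound in outline: a JL projection of the truncated Taylor series $p_k(A/2)$, using $\|\exp(A/2)-p_k(A/2)\|\le e^{-k}$ and ${\tt Tr}(\exp(A))\ge ne^{-\lambda_{\max}}$, followed by rescaling to trace $n$.

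One point needs to be made explicit. Driving the truncation error below the $\gamma\tau$ scale requires $k\gtrsim\lambda_{\max}+\log\frac{n^{O(1)}}{\gamma\tau}$. This applies in each vector and also in your computed trace, which rescales every vector and so must be accurate to relative error $O(\gamma)$; it does not merely inherit the JL factor $1\pm\gamma/3$. That requirement fits inside the stated $k$ only because $\tau=O(n^{3/2})$ forces $\log\frac{n^{5/2}}{\tau}=\Omega(\log n)$, and $\gamma\ge n^{-O(1)}$ is implicitly assumed (in this paper $\gamma$ is only inverse-polylogarithmic).
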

If we use this theorem inside Algorithm~\ref{alg:MW}, then the matrices
$A$ will have the form $A=\eta\sum_{r=1}^{t-1}N^{(r)}$;
their spectral norm will be bounded by $\eta\rho T$.
Therefore, we can set $\lambda_{\max}=\eta\rho T=\Theta(\frac{\rho n \log n}{\alpha})$
in \cref{th:GramApproximation}.
Parameters $\gamma$ and $\tau$ will be specified later.

We will write $(v_1,\ldots,v_n)\approx_{\gamma,\tau}(\tilde v_1,\ldots,\tilde v_n)$ if conditions~\eqref{eq:GramApproximation} hold.
We now need to show how to solve the following problem.

\vspace{5pt}
\noindent\hspace{0pt}
\begin{minipage}{\dimexpr\columnwidth-10pt\relax}
\fbox{\parbox{\textwidth}{

\noindent {\bf Input}:
 (unobserved) matrix $V\in\mathbb R^{n\times n}$ with columns $v_1,\ldots,v_n\in\mathbb R^n$ and 
 (observed) matrix $\tilde V\in\mathbb R^{d\times n}$ with columns $\tilde v_1,\ldots,\tilde v_n\in\mathbb R^d$ such
 that 
  $(v_1,\ldots,v_n)\approx_{\gamma,\tau}(\tilde v_1,\ldots,\tilde v_n)$ and
${\tt Tr}(X)={\tt Tr}(\tilde X)=n$ where 
$X=V^TV$ and $\tilde X=\tilde V^T\tilde V$.

\noindent {\bf Output}: either 
\\ (i) matrix $N$ of the form 
$ N={\tt diag}(y)+\sum_p f_pT_p + \sum_S z_S K_{S} - \sum_{ij}\lambda_{ij}L_{ij} 
$
 where $f_p\ge 0$, $z_S\ge 0$, $\sum_i y_i+\xi n^2\sum_Sz_S\ge \alpha$, $\lambda_{ij}$ are non-negative variables with
 ${\tt deg}(G^\lambda)\le \midweight$,
  and $N\bullet X\le 0$; or
 \\
(ii) a $\Theta(1)$-balanced vertex cut of value at most $\kappa\alpha$.

}}
\end{minipage}
\vspace{5pt}

A procedure that solves the problem above will be called an ``{\tt Oracle}'',
and the spectral norm $||N||$ of matrix $N$ will be called the {\em width} of the oracle.

\section{{\tt Oracle} implementation}
\label{sec:oracle-impl}

In this section, we will show how to implement the
oracle used for MMWU.
That is, we will solve the problem specified at the end of the previous section.

First, we will handle an easy case that returns early if triggered.

Let us denote $S=\{i\in V\::\:||\tilde v_i||^2\le 4/c\}$. 
We have $\sum_{i\in V}||\tilde v_i||^2={\tt Tr}(\tilde V^T\tilde V)=n$ and thus $|S|\ge (1-c/4)n$.

\begin{proposition}\label{prop:oracle-easy-case}
Suppose that $K_S\bullet \tilde X< \tfrac {\xi n^2}4$.
Then setting $y_i=-\tfrac{\alpha}n$ for all $i\in V$, $z_{S}=\tfrac{2\alpha}{\xi n^2}$, $z_{S'}=0$ for all $S'\ne S$, and $\lambda_{ij}=0$
gives a valid output of the oracle with width $\rho=O(\tfrac \alpha n)$ assuming that 
 parameters $\tau,\gamma$ in \cref{th:GramApproximation} satisfy 
 $\gamma\le \tfrac 12$ and $\tau\le \tfrac \xi 2$.
\end{proposition}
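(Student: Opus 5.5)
The plan is to check directly that the prescribed variables satisfy every requirement of output (i) of the {\tt Oracle}, and then to bound the spectral norm of the resulting matrix. With these choices the feedback matrix is $N={\tt diag}(y)+z_SK_S=-\tfrac{\alpha}{n}I+\tfrac{2\alpha}{\xi n^2}K_S$.

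\textbf{Feasibility.} First, $S\in\calS$, since we already showed $|S|\ge(1-c/4)n$, so $z_S$ is a legitimate dual variable. All $f_p$, $z_{S'}$ and $\lambda_{ij}$ are nonnegative. Since $\lambda=0$, we have ${\tt deg}(G^\lambda)=0\le \midweight$. The objective condition holds with equality:
\[
\sum_i y_i+\xi n^2 z_S=-\alpha+2\alpha=\alpha .
\]

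\textbf{The condition $N\bullet X\le 0$.} This is the main step. Using ${\tt Tr}(X)=n$,
\[
N\bullet X=-\alpha+\tfrac{2\alpha}{\xi n^2}\,K_S\bullet X ,
\]
so it suffices to show $K_S\bullet X\le \xi n^2/2$. Recall $K_S\bullet X=\sum_{i<j\in S}\|v_i-v_j\|^2$, which is expressed in the unobserved vectors. We transfer to the observed ones via the second inequality of~\eqref{eq:GramApproximation}: for each pair,
\[
\|v_i-v_j\|^2\le(1+\gamma)\|\tilde v_i-\tilde v_j\|^2+\gamma\tau\le \tfrac32\|\tilde v_i-\tilde v_j\|^2+\tfrac{\xi}{4},
\]
where we used $\gamma\le\tfrac12$ and $\tau\le\xi/2$. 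We sum this over the at most $n^2/2$ pairs in $S$ and use the hypothesis $K_S\bullet\tilde X<\xi n^2/4$. This gives
\[
K_S\bullet X\le \tfrac32\cdot\tfrac{\xi n^2}{4}+\tfrac{n^2}{2}\cdot\tfrac{\xi}{4}=\tfrac{\xi n^2}{2},
\]
as needed. The only care required is this constant bookkeeping; the definition of $S$ via $\|\tilde v_i\|^2\le 4/c$ is not needed here beyond guaranteeing $|S|$ is large.

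\textbf{Width.} We use the triangle inequality for the spectral norm, $\|N\|\le\tfrac{\alpha}{n}\|I\|+\tfrac{2\alpha}{\xi n^2}\|K_S\|$. The matrix $K_S$ is the Laplacian of the complete graph on $S$, padded with zeros, so its eigenvalues are $0$ and $|S|\le n$. Hence $\|N\|\le\tfrac{\alpha}{n}+\tfrac{2\alpha}{\xi n}$. Since $\xi=\tfrac94c^2$ is a constant, this is $O(\alpha/n)$, completing the proof.
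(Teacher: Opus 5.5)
Your proof is correct and follows essentially the same route as the paper: you transfer $K_S\bullet\tilde X<\xi n^2/4$ to $K_S\bullet X\le \xi n^2/2$ by summing the pairwise Gram-approximation bound $(1+\gamma)\tilde z_{ij}+\gamma\tau$ over at most $n^2/2$ pairs, then check that the objective equals $\alpha$ and that $N\bullet X\le 0$ using ${\tt Tr}(X)=n$, and finally bound the width by the triangle inequality. You also make explicit two points the paper leaves implicit, namely that $S\in\calS$ and that $\|K_S\|=|S|\le n$.
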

\begin{proof}
Denote $z_{ij}=||v_i-v_j||^2$ and $\tilde z_{ij}=||\tilde v_i-\tilde v_j||^2$.
We know that $\tilde Z:=\sum_{ij} \tilde z_{ij}=K_{S}\bullet \tilde X<\tfrac {\xi n^2}4$ where the sum is over $i,j\in S$ with $i<j$.
Also, $|z_{ij}-\tilde z_{ij}|\le \gamma(\tilde z_{ij}+\tau)$ for all $i,j$.
This implies that 
$$
K_{S}\bullet X=\sum_{ij} z_{ij}
< \tilde Z + \gamma \tilde Z + \tfrac{n^2}{2}\gamma\tau
\le(1+\gamma) \tfrac {\xi n^2}4 + \tfrac{n^2}{2} \gamma\tau 
\le \tfrac{\xi n^2}2
$$
Note that $N=-\tfrac \alpha n I + \tfrac {2\alpha}{\xi n^2}K_{S}$.
We have $\sum_iy_i+\xi n^2\sum_{S'}z_{S'}=n\cdot(-\tfrac \alpha n)+\xi n^2\cdot \tfrac {2\alpha}{\xi n^2}=\alpha$
and $N\bullet X=-\tfrac \alpha n I\bullet X + \tfrac {2\alpha}{\xi n^2}(K_{S}\bullet X)
\le -\tfrac \alpha n \cdot n + \tfrac {2\alpha}{\xi n^2}\cdot \tfrac {\xi n^2}2=0$, as desired.
Also, $||N||\le||-\tfrac \alpha n I||+||\tfrac {2\alpha}{\xi n^2}K_{S}||=O(\tfrac \alpha n)$.
\end{proof}

From now on we make the following assumption.
\begin{assumption}\label{assumption:main}
$||\tilde v_i||^2\le 4/c$ for all $i\in S$, $|S|\ge (1-c/4)n$ and $\sum_{i,j\in S}||\tilde v_i-\tilde v_j||^2=K_S\bullet \tilde X\ge \tfrac{\xi n^2}4$.
\end{assumption}

Let us set $y_i=\frac \alpha n$ for all $i\in V$ and $z_{S'}=0$ for all $S'$, then $\sum_i y_i+\xi n^2\sum_{S'} z_{S'}=\alpha$.
Note that $N=\tfrac \alpha n I + \sum_p f_p T_p - \sum_{ij}\lambda_{ij}L_{ij}$ and $\tfrac \alpha n I\bullet X=\tfrac \alpha n {\tt Tr}(X)=\alpha$,
so condition $N\bullet X\le 0$ is equivalent to
$(\sum_p f_p T_p  - \sum_{ij}\lambda_{ij}L_{ij})\bullet X\le -\alpha$.
With a variable substitution, our goal thus becomes as follows.

\vspace{5pt}
\noindent\hspace{0pt}
\begin{minipage}{\dimexpr\columnwidth-10pt\relax}
\fbox{\parbox{\textwidth}{
 \em Find either\\
 (i)  
a matrix $N$ of the form 
$ N=\sum_p f_pT_p  - \sum_{ij}\lambda_{ij}L_{ij} 
$
 where $f_p\ge 0$, $\lambda_{ij}\ge 0$,
 ${\tt deg}(G^\lambda)\le \midweight$,
  and $N\bullet X\le -\alpha$;  
  or\\
  (ii) a $\Theta(1)$-balanced vertex cut of value at most $\kappa\alpha$.
}}
\end{minipage}
\vspace{5pt}

In the remainder of this section we describe how to solve this problem.
To simplify notation, we will assume that vectors $\tilde v_i$ for $i\in V$ are unique,
and rename the nodes in $V$ so that $\tilde v_x=x$ for each $x\in V$.
Thus, we now have $V\subseteq\mathbb R^d$. The ``true'' vector in $\mathbb R^n$ corresponding to $x\in V$ is still denoted as $v_x$.

\subsection{Procedure \texorpdfstring{${\tt Matching}(u)$}{Matching(u)}}
The main building block of the oracle is a procedure that takes vector $u\in\mathbb R^d$ and either outputs a directed matching $M$ on nodes $V$
or terminates the oracle. It works as follows ($c',\Delta,\sigma,\beta$ are positive constants that will be specified later):

\begin{algorithm}[H]
  \DontPrintSemicolon
\SetNoFillComment

	compute $w_x=\langle x,u\rangle$ for each $x\in S$ \\
	sort $\{w_x\}_{x\in V}$, let $A,B$ be subsets of $S$ with $|A|=|B|=2c' n$ containing nodes with the least and the greatest values of $w_x$, respectively \\
	construct directed graph $G'$ with nodes $V'=\{x,\bar x\::\:x\in V\}\cup\{s,t\}$ as follows: (i) for each $x\in V$ add edge $(x,\bar x)$ of capacity $\tfrac{\midweight(x)}2$;
	(ii) for each $\{x,y\}\in E$ add edges $(\bar x,y)$, $(\bar y,x)$ of infinite capacity;
	(iii) for each $x\in A$ add edge $(s,x)$ of capacity $\beta$; %
	(iv) for each $y\in B$ add edge $(\bar y,t)$ of capacity $\beta$ \\ %
	compute maximum $s$-$t$ flow $f'$ and the corresponding minimum $s$-$t$ cut $(\Scut,\Tcut)$ in $G'$ with minimal $|\Tcut|$ \\
	if capacity of the cut is less than 
	$c'n\beta$
	 then return set $U=\{x\in V\::\:x\in\Scut,\bar x\in\Tcut\}$ and terminate the oracle \\
	compute flow decomposition of $f'$. For each path $p'=(s,x_1,\bar x_1,\ldots,x_k,\bar x_k,t)$
	carrying flow $f'_{p'}$ define path $p=(x_1,\ldots,x_k)$ (with $x_1\in A,x_k\in B$) and  set $f_p=f'_{p'}$.
	For each pair $x,y\in V$ set $d_{xy}=\sum_{p=(x,\ldots,y)}f_p$. \\
	if
	$\sum_{x\in A,y\in B} d_{xy}||x-y||^2 \ge 2\alpha$,
	then set $\lambda_{xy}=f'_{\bar x y}+f'_{\bar yx}$ for each $\{x,y\}\in E$, return variables $\lambda_{xy}$ and matrix $N=-\calL(D)$
	where $D$ is the symmetric matrix with $D_{xy}=d_{xy}$ if $x\in A,y\in B$ (and other entries zeros).
	Terminate the oracle. \\
	let $M_{\tt all}=\{(x,y)\in A\times B\::\:d_{xy}>0,w_y-w_x\ge \sigma\}$ and $M_{\tt short}=\{(x,y)\in M_{\tt all}\::\:||x-y||^2\le \Delta\}$ \\
	pick maximal matching $M\subseteq M_{\tt short}$ and return $M$
      \caption{${\tt Matching}(u)$. 
      }\label{alg:Matching}
\end{algorithm}
Note that in line 6 there are at most $O(m)$ paths $p'=(s,x_1,\ldots,\bar x_k,t)$,
and for each such path it suffices to compute only the endpoints $x_1,\bar x_k$.
These computations can be done in $O(m \log n)$ time using dynamic trees~\cite{SleatorTarjan}.

\begin{lemma}\label{lemma:oracle5}
If the algorithm terminates at line 5 then the returned set $U$ is a $c'$-balanced vertex separator
with $\midweight(U)\le 2c'n \beta$.
\end{lemma}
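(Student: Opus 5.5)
We analyse the minimum $s$-$t$ cut $(\Scut,\Tcut)$ of $G'$ found in line 4, assuming its capacity is below $c'n\beta$. All edges of $G'$ have finite capacity except the edges $(\bar x,y)$ and $(\bar y,x)$ for $\{x,y\}\in E$. Hence no such edge crosses from $\Scut$ to $\Tcut$. The cut capacity therefore splits into three parts:
\begin{itemize}
\item $\beta\cdot|A\cap \Tcut|$, from the edges $(s,x)$ with $x\in A\cap\Tcut$;
\item $\beta\cdot|\{y\in B: \bar y\in \Scut\}|$, from the edges $(\bar y,t)$;
\item $\sum_{x\in U}\tfrac{\midweight(x)}2$, from the edges $(x,\bar x)$ with $x\in\Scut$, $\bar x\in\Tcut$.
\end{itemize}
Each part is at most the total, which is less than $c'n\beta$. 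The third part gives $\midweight(U)< 2c'n\beta$, which is the weight bound. The first two parts give $|A\cap\Tcut|< c'n$ and $|\{y\in B:\bar y\in\Scut\}|<c'n$.

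Next we define the two sides:
\[
A'=\{x\in V\setminus U: x\in\Scut\},\qquad B'=\{x\in V\setminus U: \bar x\in \Tcut\}.
\]
We claim that $(A',B')$ partitions $V\setminus U$. Take $x\notin U$. If $x\in\Scut$ and $\bar x\in\Tcut$, then $x\in U$ by definition, which is excluded. If $x\in\Tcut$ and $\bar x\in\Scut$, the cut uses no edge $(x,\bar x)$, and we exclude this case using the choice of cut with minimal $|\Tcut|$. Moving $\bar x$ into $\Tcut$ adds no new cut edge, because the only edge entering $\bar x$ is $(x,\bar x)$ and $x$ is already in $\Tcut$. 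It may remove the edge $(\bar x,t)$ if present. It also adds no infinite edge, since the outgoing edges $(\bar x,y)$ then start in $\Tcut$. So the modified cut is still minimum and has larger $|\Tcut|$. At first sight this contradicts nothing, since minimality of $|\Tcut|$ would prefer moving vertices into $\Scut$. The actual argument goes the other way: move $x$ into $\Scut$, so that both $x$ and $\bar x$ lie in $\Scut$. The edges entering $x$ are $(s,x)$ and the infinite edges $(\bar y,x)$; the latter come from $\Scut$ or $\Tcut$. This step needs care. It is cleaner to observe that every remaining vertex already has a well-defined side: call $x$ \emph{left} if $\bar x\in\Scut$ and \emph{right} if $x\in\Tcut$. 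Any vertex with $x\in\Tcut$ and $\bar x\in\Scut$ can be handled through the minimality of $|\Tcut|$, and I expect this bookkeeping to be the fiddliest step. In any event, each vertex of $V\setminus U$ is assigned to exactly one of the two sides.

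We now check that there is no edge between the sides. Let $\{x,y\}\in E$ with $x\in A'$ (so $x,\bar x\in\Scut$) and $y\in B'$ (so $y,\bar y\in\Tcut$). Then the infinite edge $(\bar x,y)$ goes from $\Scut$ to $\Tcut$, contradicting finiteness of the cut.

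Finally, for balance: $A$ has $2c'n$ vertices and fewer than $c'n$ of them lie in $\Tcut$. So more than $c'n$ vertices of $A$ lie in $\Scut$, and these belong to $A'\cup U$. Symmetrically, more than $c'n$ vertices of $B$ lie in $B'\cup U$. To remove $U$ from these counts, we should treat $U$ as cut vertices and bound the sides after discarding $U$. If instead $U$ is large, the weight bound still holds, and balance only needs $\max\{|A'|,|B'|\}\le (1-c')n$. This follows because $B'\supseteq (B\cap \Tcut$-side$)\setminus U$, and so on. I expect the cleanest route is to count $x\in B$ with $\bar x\in\Tcut$, which are more than $c'n$ and lie in $B'\cup U$, and conclude $|A'|\le n-|B'\cup U|<(1-c')n$. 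Symmetrically, $|B'|<(1-c')n$. This gives a $c'$-balanced separator.
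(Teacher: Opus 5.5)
There is a genuine gap: you never prove that $(A',B')$ covers all of $V\setminus U$. Your weight bound, the no-edge argument, and the final balance count are all correct. The balance count is essentially the paper's, and it only needs $A'$ to be disjoint from $B'\cup U$. Your $A'$ and $B'$ are exactly the paper's sets $\{x: x,\bar x\in\Scut\}$ and $\{x: x,\bar x\in\Tcut\}$. But a vertex with $x\in\Tcut$ and $\bar x\in\Scut$ lies in neither $A'$ (since $x\notin\Scut$), nor $B'$ (since $\bar x\notin\Tcut$), nor $U$. You stop at ``this step needs care'' and then simply assert that every vertex is assigned to exactly one side. Your ``cleaner'' labeling does not fix this: such a vertex is both \emph{left} ($\bar x\in\Scut$) and \emph{right} ($x\in\Tcut$). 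Without coverage, $(A',B')$ does not show that $U$ is a $c'$-balanced vertex separator.

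The missing step is the move you started and abandoned: put $x$ into $\Scut$. The only edge leaving $x$ is $(x,\bar x)$, which now lies inside $\Scut$. Every edge entering $x$ now ends in $\Scut$, so it does not count toward the directed cut capacity. In particular, edges $(\bar y,x)$ with $\bar y\in\Tcut$ go from $\Tcut$ to $\Scut$ and never contribute, so they need no care. The capacity therefore does not increase; it drops by $\beta$ if $x\in A$. The cut stays minimum while $|\Tcut|$ shrinks, contradicting the choice of a minimum cut with minimal $|\Tcut|$. This is precisely the paper's argument.

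Alternatively, you can avoid minimality altogether by taking $B'=\{x\in V\setminus U: x\in\Tcut\}$, the complement of $A'$ in $V\setminus U$. The no-edge argument still works: $x\in A'$ forces $\bar x\in\Scut$, so the infinite edge $(\bar x,y)$ with $y\in\Tcut$ would cross the cut. Balance still holds because $B'\cap A\subseteq A\cap\Tcut$ and $A'\cap B\subseteq\{y\in B:\bar y\in\Scut\}$, and both sets have fewer than $c'n$ elements.
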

\begin{proof}
We have $c'n\beta\ge{\tt cost}(\Scut,\Tcut)\ge \sum_{x\in U}\tfrac {\midweight(x)}2$ and hence $\midweight(U)\le 2c'n\beta$.
Let us show that $U$ is a $c'$-balanced vertex separator.
There are no nodes $x\in V$ with $x\in\Tcut,\bar x\in\Scut$ (otherwise
we could reassign $x$ to $\Scut$ without increasing the cost of the cut, which  would contradict the minimality of $\Tcut$).
Thus,  $X\sqcup Y\sqcup U$ is a partitioning of $V$,
where $X=\{x\in V\::\:x,\bar x\in\Scut\}$ and $Y=\{x\in V\::\:x,\bar x\in\Tcut\}$.
There are no edges in $G$ between $X$ and $Y$,
since otherwise there would be an edge of infinite capacity from $\Scut$ to $\Tcut$ in $G'$.
We have $c'n\beta\ge{\tt cost}(\Scut,\Tcut)\ge \sum_{x\in Y\cap A}\beta$,
and hence $|Y\cap A|\le c'n$. Condition $|A|=2c'n$ then implies that $|Y|\le (1-c')n$.
By a symmetric argument, $|X| \le (1-c')n$.
\end{proof}

\begin{lemma}\label{lemma:oracle7}
Suppose that parameters $\tau,\gamma$ in \cref{eq:GramApproximation}
satisfy $\tau \le 2$, $\gamma \le \frac {\alpha}{ (32/c+2\tau)c' n\beta}$.
If the algorithm terminates at line 7, then the returned variables $\lambda_{ij}$ and matrix $N$
are valid output of the oracle. Matrix $N$ has at most $O(m)$
non-zero entries, and its spectral norm is at most $\beta$.
\end{lemma}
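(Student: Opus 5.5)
The plan is to check the three requirements of the oracle's output (i) directly from the flow decomposition. These are: the algebraic form of $N$, the degree constraint ${\tt deg}(G^\lambda)\le\midweight$, and $N\bullet X\le-\alpha$. The sparsity and spectral-norm claims then come from bounding the weighted degrees of $D$.

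\textbf{Form of $N$.} First I would cancel any cycles in $f'$, so the flow decomposition consists only of $s$-$t$ paths. Each such path is simple in $G'$, so the induced sequence $p=(x_1,\dots,x_k)$ has distinct nodes and $p\in\calP$. Moreover $k\ge 2$, because $A$ and $B$ are disjoint, being the lowest and highest $2c'n$ values of $w_x$ within $S$. Consecutive nodes $x_i,x_{i+1}$ of $p$ are joined by the arc $(\bar x_i,x_{i+1})$ of $G'$, which comes from the edge $\{x_i,x_{i+1}\}\in E$. Summing over paths therefore gives
\[
\sum_p f_p\,\calL(G_p)=\sum_{\{x,y\}\in E}(f'_{\bar xy}+f'_{\bar yx})L_{xy}=\sum_{xy}\lambda_{xy}L_{xy}.
\]
We also have $\sum_p f_p\,\calL(G_{(p_0,p_{\ell(p)})})=\calL(D)$, since $d_{xy}$ aggregates exactly the flow of the paths with endpoints $x\in A$, $y\in B$. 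Hence $\sum_p f_pT_p-\sum_{xy}\lambda_{xy}L_{xy}=-\calL(D)=N$, with $f_p,\lambda_{xy}\ge0$.

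\textbf{Degree constraint.} For a node $x$, $\sum_y\lambda_{xy}$ equals the flow entering $x$ from barred nodes plus the flow leaving $\bar x$ to unbarred nodes. All flow entering $x$ must continue through the arc $(x,\bar x)$, and all flow leaving $\bar x$ arrives through that same arc. Each of the two terms is therefore at most $\midweight(x)/2$, so the total is at most $\midweight(x)$.

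\textbf{The bound $N\bullet X\le-\alpha$.} We have $N\bullet X=-\sum_{x\in A,y\in B}d_{xy}\|v_x-v_y\|^2$. Applying \eqref{eq:GramApproximation} gives $\|v_x-v_y\|^2\ge\|x-y\|^2-\gamma(\|x-y\|^2+\tau)$. Since $A,B\subseteq S$, Assumption~\ref{assumption:main} gives $\|x-y\|^2\le 2(\|x\|^2+\|y\|^2)\le 16/c$. The total flow satisfies $\sum d_{xy}\le|A|\beta=2c'n\beta$. The error term is thus at most
\[
\gamma\cdot 2c'n\beta\,(16/c+\tau)=\gamma\,c'n\beta\,(32/c+2\tau)\le\alpha
\]
by the hypothesis on $\gamma$. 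Combined with the line-7 condition $\sum d_{xy}\|x-y\|^2\ge2\alpha$, this gives $N\bullet X\le-2\alpha+\alpha=-\alpha$.

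\textbf{Sparsity and spectral norm.} The flow decomposition has $O(m)$ paths, so $D$ has $O(m)$ nonzero entries, and hence so does $\calL(D)$. Every $x\in A$ has weighted $D$-degree at most the capacity $\beta$ of $(s,x)$, and symmetrically every $y\in B$ has degree at most $\beta$ via $(\bar y,t)$.

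I expect the main obstacle to be obtaining exactly $\|N\|\le\beta$ rather than a constant multiple. Gershgorin applied to $\calL(D)$ immediately gives $\|\calL(D)\|\le 2\max_x\deg_D(x)\le2\beta$, which only gives $\|N\|\le2\beta$, not the stated $\beta$. This is harmless for the width analysis, where only $\rho=O(\beta)$ matters. To get $\beta$ exactly I would try a sharper argument exploiting the bipartite structure of $D$ between $A$ and $B$. If that fails, I would accept the constant $2$.
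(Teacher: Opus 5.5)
Your argument is the paper's own. It uses the same three ingredients: the degree bound via flow conservation through the arc $(x,\bar x)$, the per-path identity $f_pT_p-f_p\calL(G_p)=-f_p\calL(G_{(x_1,x_k)})$ summed over the decomposition, and the estimate $\gamma(16/c+\tau)\cdot|A|\beta\le\alpha$ set against the line-7 condition. Two of your remarks make explicit points the paper leaves implicit. The first is cancelling cycles; without it, $\sum_{xy}\lambda_{xy}L_{xy}$ would also contain the Laplacian of the cycle flow and would not equal $\sum_p f_p\calL(G_p)$. The second is that each induced sequence has distinct nodes, so that $p\in\calP$.

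On the spectral norm, do not look for a sharper bipartite argument, because $\|N\|\le\beta$ is false in general. The paper's one-line justification ("maximum weighted degree of $D$ at most $\beta$, hence $\|\calL(D)\|\le\beta$") loses a factor of $2$. For example, suppose every $x\in A$ routes all of its $\beta$ units to a distinct $y\in B$. Then $D$ is a matching of weight-$\beta$ edges, which is already bipartite, and $\calL(D)(e_x-e_y)=2\beta(e_x-e_y)$. Your Gershgorin bound $\|N\|\le 2\beta$ is therefore the correct and tight statement. As you note, the downstream width analysis only uses $\rho=O(\beta)$, so nothing else changes.
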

\begin{proof}
Consider $x\in V$. There are at most $\frac {\midweight(x)}2$ units of flow entering $\bar x$ (through edge $(x,\bar x)$),
thus at most $\frac {\midweight(x)}2$ units of flow are leaving $\bar x$ through edges $(\bar x,y)$, $y\in V$.
Similarly, at most $\frac {\midweight(x)}2$ units of flow are leaving $x$ (through edge $(x,\bar x)$),
thus at most $\frac {\midweight(x)}2$ units of flow are entering $x$ through edges $(\bar y,x)$, $y\in V$.
These two facts imply that $\sum_{y:\{x,y\}\in E} \lambda_{xy}=\sum_{y:\{x,y\}\in E} f'_{\bar xy}+\sum_{y:\{x,y\}\in E} f'_{x\bar y}\le \frac {\midweight(x)}2 + \frac {\midweight(x)}2 = \midweight(x)$.

Consider path $p'=(s,x_1,\bar x_1,\ldots,x_k,\bar x_k,t)$
carrying flow $f'_{p'}=f_p$, where $p=(x_1,\ldots,x_k)$.
The contribution of this flow to matrix 
$\sum\nolimits_p f_p T_p - \calL(G^\lambda)$
is $f_p \cdot T_p - f_p\calL(G_p)=-f_p\calL_{(x_1,x_k)}$.
Summing this expression over all paths $p'$ in the flow decomposition
gives the matrix $N$ defined at line 7. Thus, we indeed have $N=\sum\nolimits_p f_p T_p - \calL(G^\lambda)$.
It remains to show that $N\bullet X\le -\alpha$. 
By Assumption~\ref{assumption:main}, $||x||^2\le 4/c$ for each $x\in S$, and hence $||x-y||^2\le 16/c$ for each $x,y\in S$.
Since $(v_1,\ldots,v_n)\approx_{\gamma,\tau}(\tilde v_1,\ldots,\tilde v_n)$, we get
$||x-y||^2 - ||v_x-v_y||^2 \le \gamma(||x-y||^2+\tau)\le \frac{\alpha}{(32/c+2\tau)c'n\beta}(16/c+\tau)=\frac {\alpha}{ 2c' n\beta}$. %
We can thus write
\begin{align*}
-N\bullet X&=\sum_{x\in A,y\in B} d_{xy}||v_x-v_y||^2 \ge
\sum_{x\in A,y\in B} d_{xy}(||x-y||^2 - \tfrac{\alpha}{2c' n \beta}) \\
&\ge \sum_{x\in A,y\in B} d_{xy}||x-y||^2 - \tfrac{\alpha}{2c' n \beta}\sum_{x\in A}\sum_{y\in B}d_{xy}
\ge 2\alpha -  \tfrac{\alpha}{2c' n \beta}\cdot |A| \cdot \beta=\alpha
\end{align*}
The maximum degree of $D$ is at most $\beta$, and hence $||N||=||\calL(D)||\le \beta$.
\end{proof}

We will  write $u\sim\calN$ to indicate that $u$ is a random vector in $\mathbb R^d$ with Gaussian independent components $u_i\sim\calN(0,1)$.
Notation ${\tt Pr}_u[\cdot]$ will mean the probability under distribution $u\sim\calN$.
We will assume that if ${\tt Matching}(u)$ terminates at line 5 or 7 then it returns an empty matching. 
Thus, we always have ${\tt Matching}(u) \subseteq V\times V$ and
$|{\tt Matching}(u)|\le |V|$.

\begin{lemma}\label{lemma:matching}
Suppose that $\beta\ge \frac{6 \alpha }{c' n\Delta}$.
There exist positive constants $c',\sigma,\delta$ for which  either (i) $\mathbb E_{u} |{\tt Matching}(u)|\ge \delta n$,
or (ii)~\cref{alg:Matching} for $u\sim\calN$ terminates at line 5 or 7 with probability at least $\Theta(1)$.
\end{lemma}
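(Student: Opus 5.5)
We prove the dichotomy directly: assume (ii) fails, so with probability at least $1-\theta$ (for a small constant $\theta$ to be chosen) the call ${\tt Matching}(u)$ reaches line 8. We then show that a constant fraction of these runs yields $|M|\ge\delta n$. The argument has three ingredients. The first is the ARV/Arora--Kale geometric fact that, under Assumption~\ref{assumption:main}, the points of $S$ are well spread: $\sum_{i,j\in S}\|x_i-x_j\|^2\ge \xi n^2/4$ and $\|x\|^2\le 4/c$. The second is that passing line 5 forces a large flow. The third is that passing line 7 forces most flow paths to be short.

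\emph{Step 1 (projection gap).} By the standard Gaussian-projection lemma for well-spread vectors of bounded norm (as in \cite{AK,Sherman}), there are constants $c'$ and $\sigma_0>0$, depending only on $c$, such that with probability at least some constant $q$ over $u\sim\calN$ the following holds. The set $A$ of $2c'n$ points of $S$ with the smallest $w_x$ and the set $B$ of $2c'n$ points with the largest $w_x$ satisfy $w_y-w_x\ge\sigma_0$ for all but a small fraction of pairs. Concretely, $\max_{x\in A'}w_x+\sigma_0\le\min_{y\in B'}w_y$, where $A'\subseteq A$ is the portion of $A$ below the $c'n$-th quantile and $B'$ is defined similarly. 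This follows because $\sum\|x-y\|^2=\Omega(n^2)$ together with bounded norms gives $\Omega(n)$ points whose one-dimensional Gaussian projections spread with constant variance. We then set $\sigma=\sigma_0$.

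\emph{Step 2 (large flow, few long pairs).} Suppose the run passes line 5. Then the max flow has value at least $c'n\beta$. The source edges out of $A$ have total capacity $2c'n\beta$, and likewise for the sink edges into $B$, so at least $c'n\beta/2$ units of flow enter from $A'$, or at most $c'n\beta/2$ of the flow starts in $A\setminus A'$. It may be easier to argue with $|A'|\ge c'n/2$ by choosing the quantiles appropriately. Combining with the analogous statement for $B'$, a flow amount $\Omega(c'n\beta)$ runs between pairs $(x,y)$ with $w_y-w_x\ge\sigma$, and these pairs lie in $M_{\tt all}$. Now suppose the run also passes line 7. Then $\sum d_{xy}\|x-y\|^2<2\alpha$, so the flow on pairs with $\|x-y\|^2>\Delta$ is less than $2\alpha/\Delta\le c'n\beta/3$, using $\beta\ge 6\alpha/(c'n\Delta)$. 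Consequently at least $\Omega(c'n\beta)$ units of flow, say $c'n\beta/6$ after adjusting constants, are carried on pairs in $M_{\tt short}$.

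\emph{Step 3 (flow to matching).} Every node of $A$ sends at most $\beta$ units of flow and every node of $B$ receives at most $\beta$, so $M_{\tt short}$, viewed as a bipartite graph, carries a fractional matching of value at least $c'n/6$. A maximal matching is at least half of a maximum matching, and by integrality the maximum matching is at least the fractional value. Hence $|M|\ge c'n/12$.

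\emph{Conclusion.} If (ii) fails with $\theta\le q/2$, then with probability at least $q/2$ both the event of Step 1 holds and the run reaches line 8. On that event $|M|\ge c'n/12$, so $\mathbb E_u|{\tt Matching}(u)|\ge qc'n/24=:\delta n$.

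The main obstacle is Step 1. It requires the correct form of the spreading lemma: a constant-probability, constant-gap separation between the lower and upper $c'n$-quantiles of the projected point set $S$, obtained from $\ell_2^2$-spread and bounded norms. We also must ensure that the flow accounting in Step 2 uses exactly those quantiles, so that the factor-of-2 loss between $|A|=2c'n$ and the cut threshold $c'n\beta$ is absorbed. If the quantile gap only holds for a smaller fraction, we will instead shrink $c'$, which keeps all constants independent of $n$.
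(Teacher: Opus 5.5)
\textbf{There is a genuine gap: Step 2's flow accounting fails with the projection gap you state in Step 1.} You only assert the gap between $A'$, the $c'n$ lowest points, and $B'$, the $c'n$ highest points, i.e.\ between the outer halves of $A$ and $B$. This also covers only a quarter of $A\times B$, not ``all but a small fraction''.

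Now $|A\setminus A'|=c'n$, and each of those nodes has a source edge of capacity $\beta$. So the nodes of $A\setminus A'$ alone can carry all $c'n\beta$ units of flow that passing line 5 guarantees. Likewise, all of that flow may enter $t$ through $B\setminus B'$. Your claim that at least $c'n\beta/2$ units leave $A'$ is therefore unjustified. Even if you grant it, the flow on $A'\times B'$ is only bounded below by $c'n\beta/2-c'n\beta<0$.

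In the worst case, no flow-carrying pair has $w_y-w_x\ge\sigma$. Then $M_{\tt all}$ is empty, and the run reaches line 8 with an empty matching. Shrinking $c'$ does not help, because $|A|=2c'n$ and the threshold $c'n\beta$ scale together. This is exactly the factor-of-2 loss you flag at the end but do not resolve.

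\textbf{The fix is what the paper does: use the spreading lemma (Kale's thesis, Lemma 14) at full strength.} With constant probability it gives sets $L,R\subseteq S$ of size $c_0n$ with $w_y-w_x\ge\sigma$ for all $x\in L$, $y\in R$. Choose $2c'\le c_0$. Then every one of the $2c'n$ lowest points satisfies $w_x\le\max_{L}w$, and every one of the $2c'n$ highest satisfies $w_y\ge\min_{R}w$. So the gap holds for every pair in $A\times B$.

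With that, the rest of your argument goes through:
\begin{itemize}
\item every pair with $d_{xy}>0$ lies in $M_{\tt all}$, so the flow on gap pairs is at least $c'n\beta$;
\item the flow on long pairs is below $2\alpha/\Delta\le c'n\beta/3$;
\item so more than $2c'n\beta/3$ units flow on $M_{\tt short}$;
\item your Step 3 then gives a fractional matching of value more than $2c'n/3$, hence $|M|>c'n/3$.
\end{itemize}

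Your Step 3 (bipartite integrality, plus a maximal matching being at least half a maximum one) is a valid alternative to the paper's direct argument. The paper supposes $|M|<c'n/3$ and notes that at least $c'n\beta/3$ units then flow between unmatched nodes. By maximality, all of those pairs have $\|x-y\|^2>\Delta$, which forces termination at line 7. Both routes give the same constant.
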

\begin{proof}
The proof is very similar to
that of~\cite[Lemma 3.4(b)]{vnk:sparsest-cut}.
Assume that condition (i) is false.
By a standard argument, Assumption~\ref{assumption:main} implies the following: there exist constants $c'\in(0,c)$ and  $\sigma>0$ such that with probability at least $\Theta(1)$ \cref{alg:Matching} reaches line 9 and we have $w_y-w_x \ge \sigma$ for all $x\in A$, $y\in B$
(see~\cite[Lemma 14]{Kale:PhD}). Suppose that this event happens.
We claim that in this case $|M|\ge \tfrac 13 c'n$.
Indeed, suppose this is false. Let $A'\subseteq A$ and $\bar B'\subseteq \bar B\eqdef\{\bar y\::\:y\in B\}$ be the sets of nodes involved in $M$
(with $|A'|=|\bar B'|=|M|=k$). The total value of flow from $A$ to $\bar B$ is at least $c'\beta n$ (otherwise we would have terminated at line 5).
The value of flow leaving $A'$ is at most $|A'|\cdot \beta\le \tfrac 13 c'\beta n$.
Similarly, the value of flow entering $\bar B'$ is at most $|\bar B'|\cdot \beta\le \tfrac 13 c'\beta n$.
Therefore, the value of flow from $A-A'$ to $\bar B-\bar B'$ is at least $c'\beta n-2\cdot \tfrac 13 c'\beta n=\tfrac 13 c'\beta n$.
For each edge $(x,y)\in M_{\tt all}$ with $x\in A-A'$, $\bar y\in \bar B-\bar B'$ we have $||x-y||^2 > \Delta$
(otherwise $M$ would not be a maximal matching in $M_{\tt short}$). Therefore,
\[
\sum_{p:p=(x,\ldots,y)} f_p ||x-y||^2 
\ge\!\!\! \sum_{\substack{p:p=(x,\ldots,y) \\ x\in A-A',\bar y\in \bar B-\bar B'}} \!\!\! f_p ||x-y||^2 
\ge  \tfrac 13 c'\beta n \cdot \Delta \ge 2\alpha
\]
But then the algorithm should have terminated at line 7 - a contradiction.
\end{proof}

Let us define $c',\sigma,\delta$ as in \cref{lemma:matching}, and set
\begin{eqnarray}
\Delta&=&\sqrt{\frac\varepsilon{\log n}} \\
\beta&\in&[\beta_0,2\beta_0], \quad \beta_0=\frac{6 \alpha }{c' n\Delta}
\end{eqnarray}
The size of set $U$ in Lemma~\ref{lemma:oracle5} is then $\midweight(U)\le 2c'n\beta=\Theta\left(\alpha\sqrt{({\log n})/{\varepsilon}}\right)$,
which corresponds to an $O\left(\sqrt{({\log n})/{\varepsilon}}\right)$ pseudo-approximation algorithm.

\subsection{Chaining algorithm}

The remaining part is identical to the algorithm in~\cite{vnk:sparsest-cut}.
Assume  that case (i) in Lemma~\ref{lemma:matching} holds (otherwise calling ${\tt Matching}(u)$ for $u\sim\calN$
will terminate the oracle after $O(1)$ expected calls).
Assume also that ${\tt Matching}(\cdot)$ is {\em skew-symmetric}, i.e.\ ${\tt Matching}(-u)$ is obtained from ${\tt Matching}(u)$
by reversing edge orientations. (This can be easily enforced algorithmically).
The idea is to use procedure ${\tt Matching}(\cdot)$ to find many ``violating paths'', i.e.\ paths $p=(p_0,\ldots,p_{\ell(p)})$ that satisfy
\begin{equation}\label{eq:def:violating}
\sum_{j=1}^{\ell(p)} ||p_j-p_{j-1}||^2  \;\;\le\;\; || p_{\ell(p)}-p_0||^2 - \Delta
\end{equation}

For a set of directed paths $M$ let $M^{\tt violating}$ be the set of paths $p\in M$ that are either violating
or contain a violating path $q$ as a subpath ($p=(\ldots,q,\ldots)$). For two sets of paths $M,M'$ define
\[
M\circ M'=\{(p,q)\::\:p\in M,q\in M',{\tt endpoint}(p)={\tt startpoint}(q)\}
\]

\begin{theorem}[\cite{Sherman,vnk:sparsest-cut}]\label{th:vnk:chaining}
Let ${\tt Matching}(u)$ be a procedure that for given $u\in\mathbb R^d$ returns a directed matching on $V$
in which every edge $(x,y)$ satisfies $\langle y-x,u\rangle\ge \sigma$ and $||x-y||^2\le \Delta$.
Assume that it is skew-symmetric and $\mathbb E_u[|{\tt Matching}(u)|]=\Omega(n)$.

\begin{sloppypar}
There exists an efficiently samplable distribution $\calD$ of vectors ${\bf u}=(u_1,\ldots,u_K)$ with $K=O(1/\Delta)$
such that set $M({\bf u})={\tt Matching}(u_1)\circ\ldots\circ {\tt Matching}(u_K)$
satisfies $\mathbb E_{{\bf u}\sim\calD}|M({\bf u})^{\tt violating}|\ge e^{-\Theta(K^2)}n$.
\end{sloppypar}
\end{theorem}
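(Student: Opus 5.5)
This is the chaining theorem of Sherman, in the form refined in~\cite{vnk:sparsest-cut}. Since it is quoted from there, the plan is to follow that proof and recall its structure. It uses only metric-free facts about points $V\subseteq\mathbb R^d$ with Gaussian projections, so the vertex-separator setting changes nothing.

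\emph{Step 1: define $\calD$.} Sample $u\sim\calN$ and set $u_k=\sqrt{1-\theta^2}\,u_{k-1}+\theta g_k$ with fresh Gaussians $g_k$, for $k=1,\dots,K$. Here $\theta^2=\Theta(\Delta)$ is chosen so that $K$ steps move the direction by $\Theta(1)$. Equivalently, one can take $u_k$ to be a small rotation of a common base Gaussian. Each $u_k$ is then marginally $\calN$. By rotational invariance, the pair $(u_k,u_{k+1})$ has a joint density within a constant factor of the density of $(u,u)$ "smoothed" by noise of size $O(\sqrt\Delta)$. This is the point of matching the step size to $\Delta$: an edge $(x,y)$ with $\|x-y\|^2\le\Delta$ changes its projection between consecutive directions by only $O(\sqrt{\Delta}\cdot\sqrt\Delta)$ in expectation.

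\emph{Step 2: composed paths are long in projection.} Every path $p\in M({\bf u})$ consists of $K$ edges with $\langle p_k-p_{k-1},u_k\rangle\ge\sigma$. Telescoping with the correlated directions, and using that the $u_k$ stay close to a common direction, gives $\langle p_K-p_0,\bar u\rangle\gtrsim K\sigma$ minus error terms. On the other hand, $\sum_k\|p_k-p_{k-1}\|^2\le K\Delta=O(1)$. A path is violating unless $\|p_K-p_0\|^2<\sum_k\|p_k-p_{k-1}\|^2+\Delta=O(1)$. However, a Gaussian projection of length $\Omega(K\sigma)=\Omega(1/\sqrt\Delta)\cdot\sqrt{\Delta}$-scaled forces $\|p_K-p_0\|$ to be large except on an event of probability $e^{-\Omega(K^2)}$ (a Gaussian tail of order $K$). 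Hence nonviolating composed paths are rare, and concentration-of-measure arguments convert "many paths, few violating" into a contradiction.

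\emph{Step 3: many composed paths exist.} We need $\mathbb E|M({\bf u})|\ge e^{-\Theta(K^2)}n$. Following Sherman, this goes by induction on $k$, using skew-symmetry and the $\Omega(n)$ expected matching size. The inductive bound is that the expected number of paths of length $k$ ending at a node $z$ under the current direction is at least $e^{-O(k^2)}$ times its marginal. The matching from the next correlated direction then extends a constant fraction, up to a factor $e^{-O(k)}$, coming from the change of measure between correlated Gaussians and the Cauchy--Schwarz/Hölder-type inequality used in~\cite{vnk:sparsest-cut}. Skew-symmetry handles the case where matched endpoints do not align, by reversing the edges. Multiplying the factors gives $e^{-\Theta(K^2)}$.

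\emph{Main obstacle.} The hardest part is Step 3: controlling the loss per composition step via the change of measure, so that it is only $e^{-O(k)}$ per step. This is where I would reproduce the argument of~\cite[Lemma 3.4]{vnk:sparsest-cut} essentially verbatim. Combining it with Step 2 then yields $\mathbb E|M({\bf u})^{\tt violating}|\ge e^{-\Theta(K^2)}n$.
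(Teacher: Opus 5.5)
The paper gives no proof of this theorem; it imports it from \cite{Sherman,vnk:sparsest-cut}, so your outline has to stand on its own. The architecture is right: lower-bound the number of composed paths, then show via a Gaussian tail that non-violating ones are rare. But Step~3, which you yourself call the main obstacle, is only asserted. The lemma you propose to copy, \cite[Lemma 3.4]{vnk:sparsest-cut}, is what this paper cites for the analogue of Lemma~\ref{lemma:matching}, not for chaining. More importantly, Step~2, the one step you actually argue, contains a genuine error.

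The path of $M({\bf u})$ leaving a node $x$ is unique, but its endpoint $z$ depends on ${\bf u}$. A tail bound for a fixed pair $(x,z)$ must therefore be summed over up to $n$ candidate endpoints, and you never account for this.

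Your bound $\|p_K-p_0\|^2=O(1)$ gives only a tail $e^{-\Omega(K^2)}$. That is the same order as the $e^{-\Theta(K^2)}$ loss in Step~3, with no control over which constant is larger, and it cannot absorb the factor $n$. The sharp version is needed. A non-violating path has $\|p_K-p_0\|^2<(K+1)\Delta$, so the tail is $\exp(-\Omega(K^2\sigma^2/(K\Delta)))=\exp(-\Omega(K/\Delta))$. One then needs both $K/\Delta\ge C\log n$, to pay for the union bound, and $K\Delta\le c$, so that $K/\Delta\gg K^2$. This is exactly why the paper takes $\Delta=\sqrt{\varepsilon/\log n}$ and $K=\Theta(\Delta\log n)$, giving $K/\Delta=\Theta(\log n)$, $K\Delta=\Theta(\varepsilon)$ and $e^{-\Theta(K^2)}=n^{-\Theta(\varepsilon)}$. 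The statement as quoted, with only $K=O(1/\Delta)$, hides this coupling.

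Your argument never involves $n$, and no argument of that shape can be correct. Take a point set satisfying the $\ell_2^2$ triangle inequality, e.g.\ the normalized hypercube $\{\pm 1/\sqrt d\}^d$. It has no violating paths at all, yet by cube isoperimetry it admits matchings with all the stated properties once $\Delta$ is a large constant times $1/\sqrt{\log n}$.

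Separately, your ``error terms'' are not controlled. The step vectors $p_k-p_{k-1}$ are selected by ${\tt Matching}(u_k)$ and so depend on the directions. Hence the fixed-vector estimate $O(\sqrt\Delta\cdot\sqrt\Delta)$ does not bound $\sum_k\langle p_k-p_{k-1},u_k-\bar u\rangle$, and a union bound over candidate edges costs a $\sqrt{\log n}$ factor that the estimate cannot afford. Handling this adaptivity while still guaranteeing $\mathbb E|M({\bf u})|\ge e^{-O(K^2)}n$ is precisely the content of the cited chaining argument, and your proposal defers it rather than supplying it.
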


Set $K=\Theta(\Delta\log n)=\Theta\left(\sqrt{\varepsilon\log n}\right)$, then indeed $K=O(1/\Delta)=O\left(\sqrt{(\log n)/\varepsilon}\right)$
since $\varepsilon=O(1)$.
Let us run the algorithm in \cref{th:vnk:chaining}
until we get $|M({\bf u})|\ge e^{-\Theta(K^2)}n=n^{1-\Theta(\varepsilon)}$.
Since we always have $|M({\bf u})|\le n$, this will happen after $ n / n^{1-\Theta(\varepsilon)} = n^{\Theta(\varepsilon)}$ runs
in expectation (or $Kn^{\Theta(\varepsilon)}=n^{\Theta(\varepsilon)}$ calls to ${\tt Matching}(\cdot)$).
Once we have a set $P$ of violating paths with $|P|\ge n^{1-\Theta(\varepsilon)}$,
the output of the oracle is constructed as follows.
Set $\lambda_{ij}=0$ for all $\{i,j\}\in E$. Set $f_p=\frac{2\alpha}{|P|\Delta}$ for all $p\in P$,
and $f_p=0$ for all other paths.
Then $N=\sum\nolimits_p f_p T_p - \calL(G^\lambda)=\frac{2\alpha}{|P|\Delta}(\calL(F)-\calL(D))$
where multigraph $F$ is the union of paths in $P$
and multigraph $D$ is the union of edges $\{{\tt startpoint}(p),{\tt endpoint}(p)\}$ over $p\in P$.
Note that the complexity of the oracle is dominated by $n^{\Theta(\varepsilon)}$ maximum flow computations.
\begin{lemma}
\label{lemma:oracle-fin}
Suppose that parameters $\tau,\gamma$ in \cref{eq:GramApproximation}
satisfy  $\tau<2$ and $\gamma\le \tfrac{\Delta}{20(K+1)}$.
Then variables $\lambda_{ij}$ and matrix $N$ defined above
are valid output of the oracle. Matrix $N$ has at most $O(n\sqrt{\log n})$
non-zero entries, and its spectral norm is at most
$\frac{\alpha}{n^{1-\Theta(\varepsilon)}}$.
\end{lemma}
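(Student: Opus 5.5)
We check the three requirements of the simplified oracle problem one at a time: the form and feasibility of the output, the inequality $N\bullet X\le-\alpha$, and the sparsity and width bounds. Form and feasibility are immediate. Since $\lambda\equiv 0$, we have ${\tt deg}(G^\lambda)=0\le\midweight$. Also $f_p\ge 0$, and $N=\sum_p f_pT_p$. Because $T_p=\calL(G_p)-\calL(G_{(p_0,p_{\ell(p)})})$, summing over $p\in P$ gives exactly $\frac{2\alpha}{|P|\Delta}(\calL(F)-\calL(D))$. So everything reduces to the inequality and the norm bounds.

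For $N\bullet X\le-\alpha$, write
\[
N\bullet X=\frac{2\alpha}{|P|\Delta}\sum_{p\in P}\Bigl(\sum_{j=1}^{\ell(p)}\|v_{p_j}-v_{p_{j-1}}\|^2-\|v_{p_{\ell(p)}}-v_{p_0}\|^2\Bigr).
\]
It then suffices to show that each violating path $p$ (in the sense of \eqref{eq:def:violating}, stated for the observed vectors) satisfies
\[
\sum_j\|v_{p_j}-v_{p_{j-1}}\|^2-\|v_{p_\ell}-v_{p_0}\|^2\le-\tfrac{\Delta}{2}.
\]
Summing this bound over $P$ gives $N\bullet X\le-\alpha$.

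The transfer from the observed vectors to the true vectors uses $(v)\approx_{\gamma,\tau}(\tilde v)$:
\begin{itemize}
\item Each true squared distance is within $\gamma(\tilde z+\tau)$ of the observed one.
\item Every path in $P$ is a subpath of a chain of at most $K$ matching edges, so $\ell(p)\le K$. Each edge has observed squared length at most $\Delta$, so the observed path length is at most $K\Delta$.
\item The endpoint distance is handled by the violating inequality itself: $\|p_\ell-p_0\|^2\le K\Delta+\ldots$, or more simply it is bounded using the path length plus $\Delta$ slack. Alternatively, bound $\|x-y\|^2\le 16/c$ only if the nodes lie in $S$; the matching nodes lie in $A\cup B\subseteq S$, so this is available.
\end{itemize}
The total error is at most $\gamma\bigl((K+1)\cdot(\text{max observed edge/endpoint length})+(K+1)\tau\bigr)$. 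Using $\tau<2$, observed edge lengths at most $\Delta\le 1$, and an endpoint length bounded by $K\Delta$ (the path length plus the triangle-like bound), this error is $O(\gamma(K+1)(1+\tau))\le\Delta/2$ given $\gamma\le\frac{\Delta}{20(K+1)}$.

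The main obstacle is this bookkeeping, in particular bounding the endpoint term. Since $K\Delta=\Theta(\Delta^2\log n)=\Theta(\varepsilon)=O(1)$, the squared endpoint distance in the observed geometry is $O(1)$. More safely, I would use $\|x\|^2\le 4/c$ on $S$. That gives an $O(1/c)$ bound, which needs $\gamma$ to absorb a constant; I expect the $20$ in the hypothesis to handle $O(1)$ lengths per term. I would also handle the subpath case: if $p\in M^{\tt violating}$ contains a violating subpath $q$, then we use $q$ itself in $P$.

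For sparsity, each path has at most $K+1=O(\sqrt{\log n})$ edges, so $F$ has $O(|P|K)$ edges and $D$ has $|P|$ edges, giving $O(n\sqrt{\log n})$ nonzeros.

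For the width, $P$ consists of compositions of at most $K$ matchings, so each vertex lies on at most $O(K)$ paths and has degree $O(K)$ in $F$ and in $D$. Then
\[
\|N\|\le\frac{2\alpha}{|P|\Delta}\cdot O(K)=\frac{\alpha\,O(K/\Delta)}{n^{1-\Theta(\varepsilon)}}.
\]
Here $K/\Delta=\Theta(\log n)=n^{O(1/\log n)}\le n^{\Theta(\varepsilon)}$ because $\varepsilon\ge\Theta(1/\log n)$, so it is absorbed into the $\Theta(\varepsilon)$ exponent. This gives the stated bound $\frac{\alpha}{n^{1-\Theta(\varepsilon)}}$.
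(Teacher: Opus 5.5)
Your overall route matches the paper's: a per-path bound $T_p\bullet X\le-\tfrac12\Delta$ summed with weight $\frac{2\alpha}{|P|\Delta}$, an $O(|P|K)$ nonzero count, and maximum degree $O(K)$ in $F$ and $D$. The degree count is correct. However, the transfer step fails as written.

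\textbf{The endpoint term.} Squared Euclidean distance does not satisfy the triangle inequality. Moreover, \eqref{eq:def:violating} gives only a \emph{lower} bound, $\|p_\ell-p_0\|^2\ge\sum_j\|p_j-p_{j-1}\|^2+\Delta$. So your ``$K\Delta$'' bound is unjustified, and it fails whenever the path's own length exceeds $(K-1)\Delta$.

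Your fallback $\|p_\ell-p_0\|^2\le 16/c$ is valid, since $A,B\subseteq S$. But it does not close with the stated constant. The error becomes $\gamma(K\Delta+16/c+(K+1)\tau)$ with $\tau$ up to $2$. With $\gamma\le\frac{\Delta}{20(K+1)}$, this is at most $\frac12\Delta$ only if $16/c\le 8(K+1)-K\Delta$. Since $16/c>32$, that fails for small $K$, and $K=\Theta(1)$ is allowed when $\varepsilon=\Theta(1/\log n)$.

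What works is Cauchy--Schwarz: $\|p_\ell-p_0\|^2\le\ell\sum_j\|p_j-p_{j-1}\|^2\le K^2\Delta$. The total error is then
\[
\gamma\bigl(K\Delta+K^2\Delta+(K+1)\tau\bigr)<\gamma(K+1)(K\Delta+2)\le 10\gamma(K+1)\le\tfrac12\Delta,
\]
provided $K\Delta=\Theta(\Delta^2\log n)=\Theta(\varepsilon)\le 8$. This is just a choice of constants in $K$ and in the upper end of the $\varepsilon$ range. The paper only says ``the same argument as in the proof of Lemma~\ref{lemma:oracle7}'', and that argument uses the $16/c$ bound. So this bookkeeping really does have to be supplied, and ``I expect the 20 to handle it'' does not supply it.

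\textbf{The width step.} This rests on a false identity. We have $n^{O(1/\log n)}=O(1)$, whereas $K/\Delta=\Theta(\log n)=n^{\log\log n/\log n}$. Hence $\Theta(\log n)\le n^{\Theta(\varepsilon)}$ holds only for $\varepsilon=\Omega(\log\log n/\log n)$. Near $\varepsilon=\Theta(1/\log n)$, your degree argument actually yields
\[
\|N\|=O\bigl(\tfrac{\alpha K}{|P|\Delta}\bigr)=O\bigl(\alpha\log n/n^{1-\Theta(\varepsilon)}\bigr).
\]
The paper's final inequality performs the same absorption without comment. The extra $\log n$ is harmless for \cref{thm:main}, since $T\propto\rho^2$ and $\eta\rho T\propto\rho$, so it costs only polylogarithmic factors. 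Still, you should carry it explicitly, or restrict $\varepsilon$, rather than assert $\log n=n^{O(1/\log n)}$.
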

\begin{proof}
We need to show that $N\bullet X\le -\alpha$. Using \cref{eq:def:violating} and the fact that
$(v_1,\ldots,v_n)\approx_{\gamma,\tau}(\tilde v_1,\ldots,\tilde v_n)$, we conclude that for any $p\in P$ we have
\begin{equation*}
\sum_{j=1}^{\ell(p)} ||v_{p_j}-v_{p_{j-1}}||^2  \;\;\le\;\; || v_{p_{\ell(p)}}-v_{p_0}||^2 - \tfrac 12\Delta
\qquad\qquad T_p\bullet X\le - \tfrac 12\Delta
\end{equation*}
(using the same argument as in the proof of \cref{lemma:oracle7}).
Therefore, $N\bullet X\le |P|\cdot \frac{2\alpha}{|P|\Delta} \cdot (-\tfrac 12 \Delta)=-\alpha$, as desired.

By construction, matrix $N$ has at most $O(|P|\cdot K)\le O(n\sqrt{\log n})$ non-zero entries.
Graphs $F$ and $D$ have maximum degree at most $2K$, and hence $||N||\le \frac{2\alpha}{|P|\Delta} (||\calL(F)||+||\calL(D)||)\le \frac{2\alpha}{n^{1-\Theta(\varepsilon)}\Delta}(2K+2K)
\le\frac{\alpha}{n^{1-\Theta(\varepsilon)}}$.
\end{proof}

\subsection{Algorithm's complexity}

We start by bounding the complexity and width of the complete oracle.

\begin{lemma}
\label{lemma:oracle-bounds}
\begin{sloppypar}
Let $\varepsilon\in[\Theta(1/\log n),\Theta(1)]$.
The oracle described above runs in
$O\left(n^{O(\varepsilon)}\cdot\tflow(n,m)\cdot\log(\max_i\midweight(i))\cdot{\rm polylog}(n)\right)$
time and produces either a $\Theta(1)$-balanced vertex separator,
or a valid feedback matrix with spectral norm
$\rho=O\left(\frac{\alpha\sqrt{\log n}}{n^{1-\Theta(\varepsilon)}\sqrt{\varepsilon}}\right)$
(with high probability).
Moreover, every feedback matrix has $O(m+n\sqrt{\log n})$ non-zero entries.
\end{sloppypar}
\end{lemma}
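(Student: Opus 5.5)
The proof assembles the pieces of the oracle already analyzed; it has three parts: correctness of each exit, width bounds, and running time.

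\emph{Correctness and width.} The oracle first tests the easy case of \cref{prop:oracle-easy-case}. If $K_S\bullet\tilde X<\xi n^2/4$ it outputs a matrix of width $O(\alpha/n)$, which is dominated by the claimed $\rho$. Otherwise \cref{assumption:main} holds, and we call ${\tt Matching}(u)$ for $u\sim\calN$. There are three possible exits.
\begin{itemize}
\item Line 5 yields a $c'$-balanced separator of weight $O(\alpha\sqrt{(\log n)/\varepsilon})$ by \cref{lemma:oracle5}.
\item Line 7 yields a valid feedback matrix of norm at most $\beta\le 2\beta_0=O(\alpha\sqrt{\log n}/(n\sqrt\varepsilon))$ with $O(m)$ nonzeros by \cref{lemma:oracle7}.
\item The chaining procedure yields a matrix of norm $\alpha/n^{1-\Theta(\varepsilon)}$ with $O(n\sqrt{\log n})$ nonzeros by \cref{lemma:oracle-fin}.
\end{itemize}
Taking the maximum of these norms gives $\rho=O(\alpha\sqrt{\log n}/(n^{1-\Theta(\varepsilon)}\sqrt\varepsilon))$ and the $O(m+n\sqrt{\log n})$ sparsity bound.

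The parameter hypotheses of these lemmas must be met simultaneously. We need $\tau\le\min(\xi/2,2)$, and
\[
\gamma\le\min\Bigl\{\tfrac12,\ \tfrac{\alpha}{(32/c+2\tau)c'n\beta},\ \tfrac{\Delta}{20(K+1)}\Bigr\}.
\]
Since $c'n\beta=\Theta(\alpha/\Delta)$, the second bound is $\Theta(\Delta)$. Since $K=\Theta(\Delta\log n)$, the third is $\Theta(1/\log n)$. So $\gamma=\Theta(1/\log n)$ works, giving $d=O(\log^3 n)$ in \cref{th:GramApproximation}. All of this holds with high probability, because \cref{th:GramApproximation} succeeds w.h.p.

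\emph{Running time.} This is the main bookkeeping step. By \cref{lemma:matching}, either case (ii) holds, and then $O(\log n)$ independent calls to ${\tt Matching}$ terminate the oracle w.h.p.; or case (i) holds, and \cref{th:vnk:chaining} needs $n^{\Theta(\varepsilon)}$ runs in expectation, i.e.\ $n^{O(\varepsilon)}$ calls to ${\tt Matching}$. The oracle does not know which case applies, so we interleave the two strategies and stop at the first success. Each run of the chaining procedure costs $K=O(\sqrt{\log n})$ calls. To get a w.h.p. bound rather than an expected one, we repeat $O(\log n)$ times and use Markov's inequality.

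Each call to ${\tt Matching}$ costs:
\begin{itemize}
\item $O(nd)=O(n\,{\rm polylog}\,n)$ for the projections and sorting;
\item one max-flow on a graph with $O(n)$ vertices and $O(m+n)$ edges;
\item $O(m\log n)$ for the flow decomposition via dynamic trees;
\item $O(n)$ for the maximal matching.
\end{itemize}
Max-flow dominates. For the capacity requirement, the infinite capacities can be replaced by $\midweight(V)+1$, and the capacities $\midweight(x)/2$ and $\beta$ can be rescaled and rounded. With $\beta$ chosen freely in $[\beta_0,2\beta_0]$ (e.g.\ rational with ${\rm poly}$ denominator), this gives polynomially bounded integer capacities up to a factor $\max_i\midweight(i)$. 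This factor enters only through the cost of arithmetic, or via $\log(\max_i\midweight(i))$ scaling phases.

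Evaluating $\sum d_{xy}\|x-y\|^2$ is $O(md)$. Multiplying by the $n^{O(\varepsilon)}{\rm polylog}(n)$ calls gives the stated bound. The step I expect to be the most delicate is this capacity-scaling justification for the $\log(\max_i\midweight(i))$ factor; if direct rounding fails, I would fall back to a standard capacity-scaling wrapper around the max-flow call.
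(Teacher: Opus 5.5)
Your proposal follows the paper's proof essentially step for step:
\begin{itemize}
\item dispatch the easy case by \cref{prop:oracle-easy-case};
\item read off the three exits and their widths from Lemmas~\ref{lemma:oracle5}, \ref{lemma:oracle7} and~\ref{lemma:oracle-fin};
\item count $n^{O(\varepsilon)}{\rm polylog}(n)$ calls to ${\tt Matching}$, each using one max-flow;
\item make capacities integral and polynomial by picking a rational $\beta\in[\beta_0,2\beta_0]$;
\item absorb large weights with a scaling reduction.
\end{itemize}
Your additions are correct and interchangeable with the paper's version. These are the explicit check that $\tau=\Theta(1)$ and $\gamma=\Theta(1/\log n)$ satisfy all the lemmas' hypotheses at once, and boosting to high probability via Markov plus restarts instead of reverse Markov with $O(n^{\Theta(\varepsilon)}\log n)$ independent runs in lockstep. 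Replacing $\infty$ by $\midweight(V)+1$ is also safe, because the cut $\{s\}\cup\{x:x\in V\}$ already has capacity $\midweight(V)/2$.

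There is one claim your argument does not establish, and it is in fact false as stated: the sparsity bound in the easy case. You count nonzeros only for the line-7 and chaining exits. The easy-case output $N=-\frac{\alpha}{n}I+\frac{2\alpha}{\xi n^2}K_S$ is dense, since $K_S$ has $|S|^2\ge(1-c/4)^2n^2$ nonzero entries. The paper's proof makes the same slip when it calls this count linear in $n$.

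What is true, and all that the running-time analysis in the proof of \cref{thm:main} uses, is the following. Since $K_S=|S|\,{\tt diag}(\mathbf{1}_S)-\mathbf{1}_S\mathbf{1}_S^T$ is diagonal plus rank one, each product $K_S u$ costs $O(n)$. So the last sentence of the lemma should be proved, and stated, as: every feedback matrix can be multiplied by a vector in $O(m+n\sqrt{\log n})$ time. A smaller bookkeeping point: the maximal-matching step costs $O(m)$, not $O(n)$, since $M_{\tt short}$ can contain up to $O(m)$ pairs (one per flow path); this does not change the final bound.
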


\begin{proof}
The algorithm for \cref{prop:oracle-easy-case}
can be implemented in linear time (compute all norms and count),
and it produces a matrix with spectral norm $O\left(\frac\alpha n\right)$.
The number of non-zero entries is linear in $n\leq m$.

The remaining algorithm for the oracle is to repeatedly run
\Cref{alg:Matching},
and possibly terminate early.
\Cref{lemma:oracle5} shows that
if the algorithm does not produce a feedback matrix,
then it produces a $\Theta(1)$-balanced vertex separator.
By \cref{lemma:oracle7},
if it terminates early to produce a feedback matrix,
then its feedback matrix has spectral norm at most
$\beta=\Theta\left(\frac{6\alpha}{c'n\Delta}\right)=\Theta\left(\frac{\alpha\sqrt{\log n}}{n\sqrt{\varepsilon}}\right)$,
and it produces a feedback matrix with at most $O(m)$ non-zero entries.
\Cref{lemma:oracle-fin} shows that
if it completes,
then it produces a feedback matrix with spectral norm
at most $\frac{\alpha}{n^{1-\Theta(\varepsilon)}}$,
and it produces a feedback matrix with at most $O(n\sqrt{\log n})$ non-zero entries.

Each run of \cref{alg:Matching}
uses
one maxflow call,
so the oracle uses
$O(n^{O(\varepsilon)})$ maxflow calls in total (in expectation).
Each call runs in $\tflow(n,m)$ time,
assuming that all edge capacities are integers bounded by a polynomial in $n$.
The latter condition can be achieved as follows. Recall that we are allowed to use any value $\beta\in[\beta_0,2\beta_0]$.
We have $\beta_0=\Theta(\alpha/(n\Delta))$ and hence $\beta_0\ge c''/n$ for some constant $c''\in(0,1/2)$, since $\alpha\ge 1$ and $\Delta\le\Theta(1)$.
We can now set $\beta=p/q$ where 
$p=\lceil \frac{2n}{c''}\beta_0\rceil$
and 
$q=\lfloor \frac{2n}{c''}\rfloor$.
Clearly, both integers $p$ and $q$ are polynomially bounded in $n$.
To incorporate the weights $\midweight(i)$ (which may be larger than a polynomial in $n$),
we apply a scaling reduction~\cite[Appendix B]{linear-maxflow}.

The oracle succeeds in finding a large violating generalized matching
in $O(n^{\Theta(\varepsilon)})$ iterations
of the matching algorithm with probability at least $1/n^{\Theta(\varepsilon)}$
by the reverse Markov inequality,
so
running this portion of the oracle
$O(n^{\Theta(\varepsilon)}\log n)$ times independently in lockstep
gives us a valid answer in
the stated time complexity with high probability.
\end{proof}

Finally, we can prove the main theorem.

\begin{proof}[Proof of \cref{thm:main}]
As discussed throughout the paper,
the result is proven using matrix multiplicative weight update, given in
\cref{alg:MW}.
Binary searching for the optimal objective value $\alpha$ uses $O(\log(n\max_i\midweight(i)))$ iterations.
\Cref{thm:regret-bound}
gives us our correctness guarantee,
assuming that the iterations run without issue.
There are two stochastic components of each iteration:
The dimension-reduced matrix exponentiation step
produces its stated guarantees with high probability (see \cref{th:GramApproximation}),
and the
oracle also produces its stated guarantees with high probability (see \cref{lemma:oracle-bounds}).
By standard union bounds, since the number of iterations is polynomial,
the entire algorithm succeeds with high probability.
It remains only to check time complexity of the algorithm.

As stated in
\cref{thm:regret-bound},
running the MMWU algorithm for a fixed $\alpha$
takes
$T=\lceil\frac{4n^2\rho^2\ln n}{\delta^2}\rceil$ iterations,
where $\delta = \alpha/2$
and $\rho$ is the width of the oracle's feedback matrices.
\Cref{lemma:oracle-bounds}
bounds the width of the feedback matrices
by
$\rho=O\left(\frac{\alpha\sqrt{\log n}}{n^{1-\Theta(\varepsilon)}\sqrt{\varepsilon}}\right)$.
We substitute this into the iteration count to get:
$T=O\left(\frac{n^{2\Theta(\varepsilon)}\log^2 n}{\varepsilon}\right)
=O(n^{\Theta(\varepsilon)}{\rm polylog}(n))$.

The time complexity of each iteration consists
of the time to perform the matrix exponentiation
plus the runtime of the oracle.

As stated in
\cref{th:GramApproximation},
the (approximate) matrix exponentiation step
uses a Taylor approximation,
which is dominated by at most $k$ matrix-vector products
for some $k\in\Theta(\max(\eta\rho T,\ln\frac1\tau))$,
where the implicit matrix being used is precisely the sum of the
feedback matrices produced so far.
We may choose $\tau\in\Theta(1)$,
so $k=\Theta(\eta\rho T)$ suffices.
The sum of the feedback matrices
will have at most $O(Tm+Tn\sqrt{\log n})$ non-zero entries,
so each matrix-vector product will use at most that many operations.
Recall that we set $\eta=\frac{\delta}{2n\rho^2}=\frac{\alpha}{4n\rho^2}$,
so
$\eta\rho T=
O\left(
    \frac{\alpha}{4n}
    \cdot\frac{n^{1-\Theta(\varepsilon)}\sqrt{\varepsilon}}{\alpha\sqrt{\log n}}
    \cdot n^{\Theta(\varepsilon)}{\rm polylog}(n)
\right)
=O\left(n^{O(\varepsilon)}\text{polylog}(n)\right)$.
Overall, this step will take
$O\left(mn^{O(\varepsilon)}\text{polylog}(n)\right)$ time.

Therefore, the final time complexity of the algorithm is
\[
O\left(n^{O(\varepsilon)}\cdot\tflow(n,m)\cdot{\rm polylog}(n\max_i\midweight(i))\right),
\]
as desired.
\end{proof}

\bibliography{maxflow}

\begin{thebibliography}{10}

\bibitem{agarwal2005}
Amit Agarwal, Moses Charikar, Konstantin Makarychev, and Yury Makarychev.
\newblock {$O(\sqrt{\log n})$} approximation algorithms for min {UnCut}, min {2CNF} deletion, and directed cut problems.
\newblock In {\em Proceedings of the Thirty-Seventh Annual ACM Symposium on Theory of Computing}, STOC '05, page 573–581, New York, NY, USA, 2005. Association for Computing Machinery.
\newblock \href {https://doi.org/10.1145/1060590.1060675} {\path{doi:10.1145/1060590.1060675}}.

\bibitem{AK}
S.~Arora and S.~Kale.
\newblock A combinatorial, primal-dual approach to semidefinite programs.
\newblock {\em Journal of the ACM}, 63(2):1--35, 2016.

\bibitem{brandt2019approximating}
Sebastian Brandt and Roger Wattenhofer.
\newblock Approximating small balanced vertex separators in almost linear time.
\newblock {\em Algorithmica}, 81(10):4070--4097, 2019.

\bibitem{bui1992finding}
Thang~Nguyen Bui and Curt Jones.
\newblock Finding good approximate vertex and edge partitions is np-hard.
\newblock {\em Information Processing Letters}, 42(3):153--159, 1992.

\bibitem{linear-maxflow}
Li~Chen, Rasmus Kyng, Yang Liu, Richard Peng, Maximilian Probst~Gutenberg, and Sushant Sachdeva.
\newblock Maximum flow and minimum-cost flow in almost-linear time.
\newblock {\em Journal of the ACM}, 72(3):1--103, 2025.

\bibitem{davies2025}
James Davies, Agelos Georgakopoulos, Meike Hatzel, and Rose McCarty.
\newblock Strongly sublinear separators and bounded asymptotic dimension for sphere intersection graphs.
\newblock In Oswin Aichholzer and Haitao Wang, editors, {\em 41st International Symposium on Computational Geometry (SoCG 2025)}, volume 332 of {\em Leibniz International Proceedings in Informatics (LIPIcs)}, pages 36:1--36:16, Dagstuhl, Germany, 2025. Schloss Dagstuhl -- Leibniz-Zentrum f{\"u}r Informatik.
\newblock \href {https://doi.org/10.4230/LIPIcs.SoCG.2025.36} {\path{doi:10.4230/LIPIcs.SoCG.2025.36}}.

\bibitem{FeigeHL08}
Uriel Feige, MohammadTaghi Hajiaghayi, and James~R. Lee.
\newblock Improved approximation algorithms for minimum weight vertex separators.
\newblock {\em {SIAM} J. Comput.}, 38(2):629--657, 2008.
\newblock \href {https://doi.org/10.1137/05064299X} {\path{doi:10.1137/05064299X}}.

\bibitem{gilbert1984separator}
John~R Gilbert, Joan~P Hutchinson, and Robert~Endre Tarjan.
\newblock A separator theorem for graphs of bounded genus.
\newblock {\em Journal of Algorithms}, 5(3):391--407, 1984.

\bibitem{jiang2020faster}
Haotian Jiang, Tarun Kathuria, Yin~Tat Lee, Swati Padmanabhan, and Zhao Song.
\newblock A faster interior point method for semidefinite programming.
\newblock In {\em 2020 IEEE 61st Annual Symposium on Foundations of Computer Science (FOCS)}, pages 910--918. IEEE, 2020.

\bibitem{Kale:PhD}
Satyen Kale.
\newblock {\em Efficient Algorithms Using the Multiplicative Weights Update Method}.
\newblock PhD thesis, Princeton University, Princeton, NJ, 2007.
\newblock Technical Report TR-804-07.

\bibitem{kawarabayashi2010separator}
Ken-ichi Kawarabayashi and Bruce Reed.
\newblock A separator theorem in minor-closed classes.
\newblock In {\em 2010 IEEE 51st Annual Symposium on Foundations of Computer Science}, pages 153--162. IEEE, 2010.

\bibitem{vnk:sparsest-cut}
Vladimir Kolmogorov.
\newblock A simpler and parallelizable {$O(\sqrt{\log n})$}-approximation algorithm for sparsest cut.
\newblock {\em Transactions on Algorithms}, 21(4):1--22, 2025.

\bibitem{korhonen2024}
Tuukka Korhonen and Daniel Lokshtanov.
\newblock Induced-minor-free graphs: Separator theorem, subexponential algorithms, and improved hardness of recognition.
\newblock In {\em Proceedings of the 2024 Annual ACM-SIAM Symposium on Discrete Algorithms (SODA)}, pages 5249--5275, 2024.
\newblock \href {https://doi.org/10.1137/1.9781611977912.188} {\path{doi:10.1137/1.9781611977912.188}}.

\bibitem{LTW}
Lap~Chi Lau, Kam~Chuen Tung, and Robert Wang.
\newblock "fast algorithms for directed graph partitioning using flows and reweighted eigenvalues".
\newblock In {\em Proceedings of the 2024 Annual ACM-SIAM Symposium on Discrete Algorithms (SODA)}, pages 591--624. SIAM, 2024.

\bibitem{leighton1999multi}
Tom Leighton and Satish Rao.
\newblock Multicommodity max-flow min-cut theorems and their use in designing approximation algorithms.
\newblock {\em Journal of the ACM}, 46(6):787–832, November 1999.
\newblock \href {https://doi.org/10.1145/331524.331526} {\path{doi:10.1145/331524.331526}}.

\bibitem{lipton1979separator}
Richard~J Lipton and Robert~Endre Tarjan.
\newblock A separator theorem for planar graphs.
\newblock {\em SIAM Journal on Applied Mathematics}, 36(2):177--189, 1979.

\bibitem{lipton1980}
Richard~J. Lipton and Robert~Endre Tarjan.
\newblock Applications of a planar separator theorem.
\newblock {\em SIAM Journal on Computing}, 9(3):615--627, 1980.
\newblock \href {https://doi.org/10.1137/0209046} {\path{doi:10.1137/0209046}}.

\bibitem{louis2010cutmatchinggamesdirectedgraphs}
Anand Louis.
\newblock Cut-matching games on directed graphs, 2010.
\newblock URL: \url{https://arxiv.org/abs/1010.1047}, \href {https://arxiv.org/abs/1010.1047} {\path{arXiv:1010.1047}}.

\bibitem{matousek2014}
Ji{\v{r}}{\'\i} Matou{\v{s}}ek.
\newblock Near-optimal separators in string graphs.
\newblock {\em Combinatorics, Probability and Computing}, 23(1):135–139, 2014.
\newblock \href {https://doi.org/10.1017/S0963548313000400} {\path{doi:10.1017/S0963548313000400}}.

\bibitem{miller98}
Gary~L. Miller, Shang-Hua Teng, William Thurston, and Stephen~A. Vavasis.
\newblock Geometric separators for finite-element meshes.
\newblock {\em SIAM Journal on Scientific Computing}, 19(2):364--386, 1998.
\newblock \href {https://doi.org/10.1137/S1064827594262613} {\path{doi:10.1137/S1064827594262613}}.

\bibitem{separatorsbook}
Arnold~L. Rosenberg and Lenwood~S. Heath.
\newblock {\em Graph separators with applications}.
\newblock Frontiers of computer science. Kluwer, 2001.

\bibitem{Sherman}
J.~Sherman.
\newblock Breaking the multicommodity flow barrier for {$O(\sqrt{\log n})$}-approximations to sparsest cut.
\newblock In {\em FOCS}, pages 363--372, 2009.

\bibitem{SleatorTarjan}
Daniel~D. Sleator and Robert~Endre Tarjan.
\newblock A data structure for dynamic trees.
\newblock {\em J. of Computer and System Sciences}, 26(3):362--391, 1983.

\bibitem{spalding2025reweighted}
Jack Spalding-Jamieson.
\newblock Reweighted spectral partitioning works: A simple algorithm for vertex separators in special graph classes.
\newblock {\em arXiv preprint arXiv:2506.01228}, 2025.

\bibitem{ungar1951theorem}
Peter Ungar.
\newblock A theorem on planar graphs.
\newblock {\em Journal of the London Mathematical Society}, 1(4):256--262, 1951.

\end{thebibliography}

\end{document}